\documentclass[letterpaper,11pt]{article}

\usepackage[cmex10]{amsmath}
\usepackage{amsxtra,amssymb,amsthm,amsfonts}
\usepackage{mathtools}
\usepackage[matha,mathx]{mathabx}
\usepackage{mathrsfs}
\usepackage[usenames]{color}
\usepackage{umoline}
\usepackage{datetime}
\usepackage[margin=1in]{geometry}
\usepackage[ruled,boxed,linesnumbered]{algorithm2e}
\usepackage{setspace}
\usepackage{bbm}

\usepackage{enumerate}
\usepackage{graphicx}
\usepackage{caption}
\usepackage{subfig}
\usepackage{amssymb}
\usepackage{array}

\usepackage{tikz}
\usepackage{pgfplots}

\vfuzz2pt 
\hfuzz2pt 
\newtheorem{lemma}{Lemma}[section]

\newtheorem{theorem}[lemma]{Theorem}
\newtheorem{cor}[lemma]{Corollary}

\newtheorem{rem}[lemma]{Remark}

\newcommand{\re}{\begin{rem}\rm}
\newcommand{\mar}{\end{rem}}

\newtheorem{defi}[lemma]{Definition}

\newcommand{\fo}{\begin{eqnarray*}}
\newcommand{\mel}{\end{eqnarray*}}

\newcommand{\kl}{\pl \le \pl}
\newcommand{\gl}{\pl \ge \pl}

\newcommand{\ez}{{\mathbb E}}

\newcommand{\cz}{{\mathbb C}}

\newcommand{\id}{\mathrm{id}}

\newcommand{\p}{\hspace{.05cm}}
\newcommand{\pl}{\hspace{.1cm}}

\newcommand{\A}{{\mathcal A}}

\newcommand{\C}{{\mathcal C}}

\newcommand{\calH}{{\mathcal H}}

\newcommand{\calM}{{\mathcal M}}

\newcommand{\calS}{{\mathcal S}}
\newcommand{\calT}{{\mathcal T}}
\newcommand{\tGamma}{{\widetilde{\Gamma}}}

\newcommand{\qd}{\end{proof}\vspace{0.5ex}}

\newcommand\hsnorm[1]{\norm{#1}_{\mathrm{HS}}}

\makeatletter
\newcommand{\opnorm}{\@ifstar\@opnorms\@opnorm}
\newcommand{\@opnorms}[1]{%
  \left|\mkern-1.5mu\left|\mkern-1.5mu\left|
   #1
  \right|\mkern-1.5mu\right|\mkern-1.5mu\right|
}
\newcommand{\@opnorm}[2][]{%
  \mathopen{#1|\mkern-1.5mu#1|\mkern-1.5mu#1|}
  #2
  \mathclose{#1|\mkern-1.5mu#1|\mkern-1.5mu#1|}
}
\makeatother

\newcommand{\norm}[1]{\Vert#1\Vert}

\newcommand{\trace}{\mathrm{tr}}
\newcommand{\transpose}{\top}
\newcommand{\diag}{\mathrm{diag}}
\newcommand{\vect}{\mathrm{vec}}

\newcommand{\conv}{\circledast}

\topmargin -1 cm

\hyphenation{comm-ut-ta-ti-ve}

\allowdisplaybreaks

\title{RIP-like Properties in Subsampled Blind Deconvolution} 
\author{Kiryung Lee and Marius Junge}

\begin{document}
\doublespacing

\maketitle


\begin{abstract}
  We derive near optimal performance guarantees for subsampled blind deconvolution. Blind deconvolution is an ill-posed bilinear inverse problem and additional subsampling makes the problem even more challenging. Sparsity and spectral flatness priors on unknown signals are introduced to overcome these difficulties. While being crucial for deriving desired near optimal performance guarantees, unlike the sparsity prior with a nice union-of-subspaces structure, the spectral flatness prior corresponds to a nonconvex cone structure, which is not preserved by elementary set operations. This prohibits the operator arising in subsampled blind deconvolution from satisfying the standard restricted isometry property (RIP) at near optimal sample complexity, which motivated us to study other RIP-like properties. Combined with the performance guarantees derived using these RIP-like properties in a companion paper, we show that subsampled blind deconvolution is provably solved at near optimal sample complexity by a practical algorithm.
\end{abstract}

\section{Introduction}

\subsection{Subsampled blind deconvolution of sparse signals}

The subsampled blind deconvolution problem refers to the resolution of two signals from
a few samples of their convolution and is formulated as a bilinear inverse problem as follows.
Let $\Omega = \{\omega_1,\omega_2,\ldots,\omega_m\}$ denote the set of $m$ sampling indices out of $\{1,\ldots,n\}$.
Given $\Omega$, the sampling operator $S_\Omega: \cz^n \to \cz^m$ is defined so that
the $k$th element of $S_\Omega x \in \cz^m$ is the $\omega_k$th element of $x \in \cz^n$ for $k = 1,\ldots,m$.
Then, the $m$ samples of the convolution $x \conv y$ indexed by $\Omega$ with additive noise constitute
the measurement vector $b \in \cz^m$, which is expressed as
\[
b = \sqrt{\frac{n}{m}} S_\Omega (x \conv y) + z,
\]
where $z$ denotes additive noise.

Let $x,y \in \cz^n$ be uniquely represented as $x = \Phi u$ and $y = \Psi v$ over dictionaries $\Phi$ and $\Psi$.
Then, the recovery of $(x,y)$ is equivalent to the recovery of $(u,v)$,
and the subsampled blind deconvolution problem corresponds to the bilinear inverse problem of recovering $(u,v)$
from its bilinear measurements in $b$, when $\Omega$, $\Phi$, and $\Psi$ are known.

A stable reconstruction in subsampled blind deconvolution is defined through the lifting procedure \cite{ahmed2014blind}
that converts the blind deconvolution to recovery of a rank-1 matrix from its linear measurements.
By the lifting procedure, bilinear measurements of $(u,v)$ are equivalently rewritten as linear measurements of the matrix $X = u v^\transpose$,
i.e., there is a linear operator $\A: \cz^{n \times n} \to \cz^m$ such that
\[
b = \A(X) + z.
\]
Then, each element of the measurement vector $b$ corresponds to a matrix inner product.
Indeed, there exist matrices $M_1,M_2,\ldots,M_m \in \cz^{n \times n}$ that describe the action of $\A$ on $X$ by
\begin{equation}
\label{eq:defcalA}
\A(X) = [\langle M_1, X \rangle, \ldots, \langle M_m, X \rangle]^\transpose.
\end{equation}
Since the circular convolution corresponds to the element-wise product in the Fourier domain, $M_\ell$'s are explicitly expressed as
\begin{align*}
M_\ell = \frac{n}{\sqrt{m}} \Phi^* F^* \diag(f_{\omega_\ell}) \overline{F} \overline{\Psi}, \quad \ell = 1,\ldots,m,
\end{align*}
where $f_{\omega_\ell}$ denotes the $\omega_\ell$th column of the unitary DFT matrix $F \in \cz^{n \times n}$.
The subsampled blind deconvolution problem then becomes a matrix-valued linear inverse problem
where the unknown matrix $X$ is constrained to the set of rank-1 matrices.

In the lifted formulation, a reconstruction $\widehat{X}$ of the unknown matrix $X$ is considered successful
if it satisfies the following stability criterion:
\begin{equation}
\label{eq:success}
\frac{\norm{\widehat{X} - X}_{\mathrm{F}}}{\norm{X}_{\mathrm{F}}}
\leq C \left( \frac{\norm{z}_2}{\norm{\A(X)}_2} \right)
\end{equation}
for an absolute constant $C$.
This definition of success is free of the inherent scale ambiguity in the original bilinear formulation.
Once $\widehat{X}$ is recovered, $u$ (resp. $v$) is identified up to a scale factor as the left (resp. right) factor of the rank-1 matrix $\widehat{X}$.

The subsampled blind deconvolution problem is ill-posed and cannot be solved without restrictive models on unknown signals.
We assume the following signal priors, which are modified from a previous subspace model for blind deconvolution \cite{ahmed2014blind}.
\begin{description}

\item[A1] \textbf{Sparsity:} The coefficient vector $u$ is $s_1$-sparse. Geometrically, $u$ belongs to the union of all subspaces spanned by $s_1$ standard basis vectors. The previous subspace model \cite{ahmed2014blind} corresponds to a special case where the subspace in the union that includes $u$ is known a priori. To simplify the notation, define
    \[
    \Gamma_s := \{u \in \cz^n:\pl \norm{u}_0 \leq s \},
    \]
    where $\norm{u}_0$ counts the number of nonzeros in $u$. Then, $u \in \Gamma_{s_1}$.
    The other coefficient vector $v$ is $s_2$-sparse, i.e., $v \in \Gamma_{s_2}$.

\item[A2] \textbf{Spectral flatness:} The unknown signals $x$ and $y$ are flat in the Fourier domain as follows. 
    Define a set $\C_{\mu}$ by
    \begin{equation}
    \label{eq:defCmu}
    \C_{\mu} := \{x \in \cz^n :\pl \textsf{sf}(x) \leq \mu \},
    \end{equation}
    where $\textsf{sf}(x)$ denotes the spectral flatness level of $x \in \cz^n$ given by
    \[
    \textsf{sf}(x) := \frac{n \norm{F x}_\infty^2}{\norm{F x}_2^2}.
    \]
    Then, $x \in \C_{\mu_1}$ and $y \in \C_{\mu_2}$. When $\Phi$ and $\Psi$ are invertible, it is equivalent to $u \in \Phi^{-1} \C_{\mu_1}$ and $v \in \Psi^{-1} \C_{\mu_2}$.\footnote{For simplicity, we restrict our analysis to the case where $\Phi$ and $\Psi$ are invertible matrices. However, it is straightforward to extend the analysis to the case with overcomplete dictionaries by replacing the inverse by the preimage operator.}

\end{description}

Our objective is to show that the subsampled blind deconvolution of signals
following the aforementioned models is possible at near optimal sample complexity.
Similarly to related results in compressed sensing, we take the following two-step approach:
i) First, in a companion paper \cite{LeeLJB2015},
it was shown that stable reconstruction from noisy measurements is available under a restricted isometry property (RIP) of the linear operator $\A$.
In particular, under a mild additional assumption on signals, we show that
a practical algorithm provably achieves stable reconstruction under RIP-like properties of $\A$;
ii) Next, in this paper, we prove that if both dictionaries $\Phi, \Psi \in \cz^{n \times n}$ are mutually independent random matrices whose entries are independent and identically distributed (i.i.d.) following a zero-mean and complex normal distribution $CN(0,1/n)$, with high probability, such RIP-like properties hold at the sample complexity of $m = O(\mu_1 s_2 + \mu_2 s_1) \log^5 n$.
This sample complexity is near optimal (up to a logarithmic factor)
when the spectral flatness parameters $\mu_1$ and $\mu_2$ are sublinear in $s_1$ and $s_2$, respectively;
Combining these results provides the desired near optimal performance guarantees.

\subsection{RIP and RIP-like properties}

We first review RIP and extend the notion to RIP-like properties.
RIP was originally proposed to show the performance guarantee for the recovery in compressed sensing by $\ell_1$-norm minimization \cite{candes2005decoding}.
It is generalized as follows:

\begin{defi}
Let $(\calH,\hsnorm{\cdot})$ be a Hilbert space where $\hsnorm{\cdot}$ denotes the Hilbert-Schmidt norm.
Let $\calS \subset \calH$ be a centered and symmetric set, i.e.,
$0 \in \calS$ and $\alpha \calS = \calS$ for all $\alpha \in \cz$ in the unit modulus.
A linear operator $\A: \calH \to \ell_2^m$ satisfies the $(\calS,\delta)$-RIP if
\[
(1-\delta) \hsnorm{w}^2 \leq \norm{\A(w)}_2^2 \leq (1+\delta) \hsnorm{w}^2, \quad \forall w \in \calS,
\]
or equivalently,
\[
\left| \norm{\A(w)}_2^2 - \hsnorm{w}^2 \right| \leq \delta \hsnorm{w}^2,
\quad \forall w \in \calS.
\]
\end{defi}

Hilbert-Schmidt norms, including the $\ell_2$ norm, are represented as an inner product of a vector with itself.
For example, $\hsnorm{w}^2 = \langle w, w \rangle$ and $\norm{\A(w)}_2^2 = \langle \A(w), \A(w) \rangle$.
This observation extends RIP to another property called \textit{restricted angle-preserving property} (RAP) defined as follows:

\begin{defi}
Let $\calS,\calS' \subset \calH$ be centered and symmetric sets.
A linear operator $\A: \calH \to \ell_2^m$ satisfies the $(\calS,\calT,\delta)$-RAP if
\[
\left| \langle \A(w'), \A(w)\rangle - \langle w', w\rangle \right| \leq \delta \hsnorm{w} \hsnorm{w'},
\quad \forall w \in \calS, w' \in \calS'.
\]
\end{defi}

In a more restrictive case with orthogonality between $w$ and $w'$ ($\langle w', w \rangle = 0$),
RAP reduces to the \textit{restricted orthogonality property} (ROP) \cite{candes2008restricted}.
\begin{defi}
Let $\calM,\calM' \subset \calH$ be centered and symmetric sets.
A linear operator $\A: \calH \to \ell_2^m$ satisfies the $(\calM,\calM',\delta)$-ROP if
\[
\left| \langle \A(w'), \A(w)\rangle \right| \leq \delta \hsnorm{w} \hsnorm{w'},
\quad \forall w \in \calM, \pl \forall w' \in \calM' ~ \text{s.t.} ~ \langle w', w \rangle = 0.
\]
\end{defi}

RIP and RAP of a linear operator $\A$ have useful implications for the inverse problem given by $\A$.
Let $\calS-\calS = \{w-w' :~ w, w' \in \calS\}$.
The $(\calS-\calS,\delta)$-RIP of $\A$ implies that $\A$ is injective when the domain is restricted to $\calS-\calS$;
hence, every $w \in \calS$ is uniquely identified from $\A(w)$.
The $(\calS-\calS,\calS-\calS,\delta)$-RAP was used to show that
practical algorithms, such as the projected gradient method, reconstruct $w$ from $\A(w)$ with a provable performance guarantee.

By definition, the $(\calS,\calS,\delta)$-RAP implies the $(\calS,\delta)$-RIP, but the converse is not true in general.
For certain $\calS$ with special structures, RIP implies RIP-like properties.
For example, when $\calS$ is a subspace, the Minkowski sum of $\calS$ and $-\calS$ coincides with $\calS$.
Therefore, $(\calS,\delta)$-RIP, $(\calS-\calS,\delta)$-RIP, and $(\calS-\calS,\calS-\calS,\delta)$-RAP are all equivalent.
The restrictive set $\calS$ as a subspace arises in many applications.
A set of matrices with Toeplitz, Hankel, circulant, symmetric, or skew symmetric structure corresponds to such an example.

Yet for another example, a sparsity model, which corresponds to a union of subspaces,
provides the desired relationship between RIP and RIP-like properties.
Let $\calS$ be the set $\Gamma_s$ with all $s$-sparse vectors in the Euclidean space.
Then, it follows that the difference set between $\Gamma_s$ and itself is contained within $\Gamma_{2s}$ (another restrictive set of the same structure but with a twice larger parameter), i.e.,
\begin{equation}
\label{eq:inclusion}
\Gamma_s + \Gamma_s \subset \Gamma_{2s}.
\end{equation}
Therefore, we have the following implications:
\begin{itemize}
  \item $(\Gamma_{2s},\delta)$-RIP implies $(\Gamma_s-\Gamma_s,\delta)$-RIP.
  \item $(\Gamma_{3s},\delta)$-RIP implies $(\Gamma_s-\Gamma_s,\Gamma_s,\delta)$-RAP.
  \item $(\Gamma_{4s},\delta)$-RIP implies $(\Gamma_s-\Gamma_s,\Gamma_s-\Gamma_s,\delta)$-RAP.
\end{itemize}
Recall that these RIP-like properties guarantee stable reconstruction of $s$-sparse vectors from $\A(w)$ by practical algorithms.
With the above implications, it suffices to show $(\Gamma_{ks},\delta)$-RIP for $k \in \{2,3,4\}$.
This is why the performance guarantees in compressed sensing are typically given in terms of $(\Gamma_{ks},\delta)$-RIP.
The above argument also applies to an abstract atomic sparsity model \cite{chandrasekaran2012convex}
and to the sparse and rank-1 model \cite{LeeWB2013spf}.

\subsection{RIP-like properties in blind deconvolution}

Next, we present our main results
that derive RIP-like properties of the linear operator $\A$ in subsampled blind deconvolution at near optimal sample complexity.
In fact, these properties hold for a slightly more general model than an exact sparsity model.
To state the main results in this setup, we define a set of approximately $s$-sparse vector by
\begin{equation}
\label{eq:deftGammas}
\tGamma_s := \{ u \in \cz^n :\pl \norm{u}_1 \leq \sqrt{s} \norm{u}_2 \}.
\end{equation}

\begin{theorem}
\label{thm:bds_rap}
There exist absolute numerical constants $C > 0$ and $\beta \in \mathbb{N}$ such that the following holds. 
Let $\Phi, \Psi \in \cz^{n \times n}$ be independent random matrices whose entries are i.i.d. following $CN(0,1/n)$.
Let $\A: \cz^{n \times n} \to \cz^m$ be defined in (\ref{eq:defcalA}).
\begin{enumerate}

\item If $m \geq C \delta^{-2} (s_1 + \mu_1 s_2) \log^5 n$, then with probability at least $1 - n^{-\beta}$,
\begin{align*}
\big| \langle \hat{u} \hat{v}^\transpose, (\A^*\A - \id) (u v^\transpose) \rangle \big| \leq \delta
\norm{\hat{u} \hat{v}^\transpose}_{\mathrm{F}} \norm{u v^\transpose}_{\mathrm{F}}
\end{align*}
for all $u,\hat{u} \in \tGamma_{s_1} \cap \Phi^{-1} \C_{\mu_1}$ and for all $v,\hat{v} \in \tGamma_{s_2}$.

\item If $m \geq C \delta^{-2} (\mu_2 s_1 + s_2) \log^5 n$, then with probability at least $1 - n^{-\beta}$,
\begin{align*}
\big| \langle \hat{u} \hat{v}^\transpose, (\A^*\A - \id) (u v^\transpose) \rangle \big| \leq \delta
\norm{\hat{u} \hat{v}^\transpose}_{\mathrm{F}} \norm{u v^\transpose}_{\mathrm{F}}
\end{align*}
for all $u,\hat{u} \in \tGamma_{s_1}$ and for all $v,\hat{v} \in \tGamma_{s_2} \cap \Psi^{-1} \C_{\mu_2}$.

\end{enumerate}
\end{theorem}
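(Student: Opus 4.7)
Using the explicit form of $\A$ from (\ref{eq:defcalA}), write the target quantity as a centered empirical process over the random sampling indices,
\begin{equation*}
Z(u,\hat u,v,\hat v) := \langle \hat u\hat v^\transpose,(\A^*\A-\id)(uv^\transpose)\rangle = \sum_{\ell=1}^m \xi_\ell(u,\hat u,v,\hat v),
\end{equation*}
where after conditioning on $\Phi,\Psi$ the $\xi_\ell$ are i.i.d.\ in $\omega_\ell$ with mean $m^{-1}\langle\hat u,u\rangle\langle\hat v,v\rangle$. By homogeneity, restrict attention to unit-norm factors; the index set for case~(1) becomes $\calT := \calT_1^2\times\calT_2^2$, with $\calT_1 := \tGamma_{s_1}\cap\Phi^{-1}\C_{\mu_1}\cap S^{n-1}$ and $\calT_2 := \tGamma_{s_2}\cap S^{n-1}$.

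The backbone is a Bernstein-type generic chaining inequality, controlling $\sup_\calT|Z|$ in terms of the $\gamma_2$- and $\gamma_1$-functionals of $\calT$ under the sub-gaussian and sub-exponential Orlicz metrics induced by a single summand $\xi_\ell$.  These metrics factor through the four vector coordinates, so the chaining splits into two stages.  First, I would cover the $v,\hat v$ directions via a Maurey-type atomic decomposition of $\tGamma_{s_2}$, which produces the $s_2$ contribution up to logarithmic factors.  Second, cover the $u,\hat u$ directions using both the sparsity constraint $\tGamma_{s_1}$ and the spectral flatness bound $\|F\Phi u\|_\infty\leq\sqrt{\mu_1/n}$, valid since $\Phi u\in\C_{\mu_1}$ and since $\|\Phi u\|_2\approx 1$ uniformly on $\tGamma_{s_1}$ on a high-probability event for $\Phi$.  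The $\ell_\infty$ bound on the random Fourier samples is what produces the $\mu_1 s_2$ contribution.

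The main obstacle is precisely the spectral flatness constraint: $\C_{\mu_1}$ is a non-convex cone which is not closed under subtraction, so one cannot reduce the RAP bound to a classical RIP on an enlarged set of the same type.  My strategy is to use the flatness bound only pointwise along the chain---as an a priori uniform bound on the entries of $F\Phi u$---never on differences of chain points, for which only the $\ell_1/\ell_2$ bound from $\tGamma_{s_1}$ is used.  This asymmetric treatment is exactly what forces the four-argument RAP formulation in the theorem in place of a two-argument RIP.  A Bernstein-type concentration inequality for the resulting process, combined with a union bound over the high-probability event that $\Phi$ acts nearly isometrically on $\tGamma_{s_1}$, then yields the stated sample complexity $m\gtrsim\delta^{-2}(s_1+\mu_1 s_2)\log^5 n$, with the fifth power of $\log n$ arising from the product of chaining levels across the four coordinates together with a decoupling step for the bilinearity in $(u,v)$ and $(\hat u,\hat v)$.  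Case~(2) follows symmetrically by exchanging the roles of $(\Phi,u,s_1,\mu_1)$ and $(\Psi,v,s_2,\mu_2)$.
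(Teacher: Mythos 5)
There is a genuine gap at the very first step: the centering of your empirical process is wrong. Conditioned on $\Phi$ and $\Psi$, the summand indexed by $\omega_\ell$ has (over $\omega_\ell$ uniform on $[n]$) conditional mean $\tfrac{1}{m}\langle \Phi\hat u\conv\Psi\hat v,\,\Phi u\conv\Psi v\rangle$, i.e.\ the correlation of the \emph{full} unsampled convolutions --- not $m^{-1}\langle\hat u,u\rangle\langle\hat v,v\rangle$. The latter is recovered only after additionally averaging over $\Phi$ and $\Psi$ (cf.\ the isotropy identity $\mathbb{E}_\Phi\A^*\A(X)=X(\Psi^*\Psi)^\transpose$). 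So even if your Bernstein-type chaining over the random sampling indices succeeds, you are left with showing that $\langle \Phi\hat u\conv\Psi\hat v,\Phi u\conv\Psi v\rangle$ is uniformly close to $\langle\hat u,u\rangle\langle\hat v,v\rangle$ over the constraint sets, and that residual is a fourth-order Gaussian chaos in $(\Phi,\Psi)$ --- precisely the object the paper says it cannot control and structures its argument to avoid. Moreover, the theorem does not assume $\Omega$ is random: it is an arbitrary fixed index set (note that $\tfrac{n}{m}\norm{S_\Omega F^*e_k}_2^2=1$ holds deterministically because the DFT is flat), so your first stage has no randomness to concentrate over in the first place.

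The paper's probabilistic skeleton is organized differently, and this is the part you need to adopt: condition on only one dictionary ($\Psi$, after verifying that with high probability $\Psi$ satisfies the $(\tGamma_{s_2},\delta/2)$-RIP and $\norm{F\Psi}_{1\to\infty}\lesssim\sqrt{\log n}$), write $\A(uv^\transpose)=R_{u,v}\xi$ with $\xi=\sqrt{n}(I_n\otimes F)\vect(\Phi)$ a standard Gaussian vector and $R_{u,v}=u^\transpose\otimes\sqrt{n/m}\,S_\Omega F^*D_{F\Psi v}$, and control $\sup\,|\langle R_{\hat u,\hat v}\xi,R_{u,v}\xi\rangle-\mathbb{E}_\Phi\langle R_{\hat u,\hat v}\xi,R_{u,v}\xi\rangle|$ as a \emph{second-order} chaos via the bilinear (polarized) extension of the Krahmer--Mendelson--Rauhut bound, with $d_{\mathrm{F}}$, $d_{2\to2}$ and $\gamma_2$ estimated through Dudley's inequality and Maurey-type entropy bounds; the bias $\mathbb{E}_\Phi\A^*\A-\id$ is then absorbed by the RIP of $\Psi$. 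Your instincts about the Maurey covering of $\tGamma_s$, the asymmetric pointwise-only use of the flatness constraint, and the source of the $\log^5 n$ factor do match what happens at the entropy-estimate level, but the choice of which randomness to condition on and which to chain over must be reversed for the argument to close.
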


In the course of proving Theorem~\ref{thm:bds_rap}, we also obtain the following corollary,
the proof of which is contained in the proof of Theorem~\ref{thm:bds_rap}.

\begin{cor}
\label{cor:bds_rap}
There exist absolute numerical constants $C > 0$ and $\beta \in \mathbb{N}$ such that the following holds. 
Let $\Phi \in \cz^{n \times n}$ be a random matrix whose entries are i.i.d. following $CN(0,1/n)$. Let $\Psi = I_n$.
Let $\A: \cz^{n \times n} \to \cz^m$ be defined in (\ref{eq:defcalA}).
Suppose that $m \geq C \delta^{-2} (\mu_2 s_1 + s_2) \log^5 n$.
Then, with probability at least $1 - n^{-\beta}$,
\begin{align*}
\big| \langle \hat{u} \hat{v}^\transpose, (\A^*\A - \id) (u v^\transpose) \rangle \big| \leq \delta
\norm{\hat{u} \hat{v}^\transpose}_{\mathrm{F}} \norm{u v^\transpose}_{\mathrm{F}}
\end{align*}
for all $u,\hat{u} \in \tGamma_{s_1}$ and for all $v,\hat{v} \in \tGamma_{s_2} \cap \Psi^{-1} \C_{\mu_2}$.
\end{cor}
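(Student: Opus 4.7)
My plan is to derive Corollary~\ref{cor:bds_rap} as a direct byproduct of the proof of Part 2 of Theorem~\ref{thm:bds_rap}. The central observation is that the randomness of $\Psi$ is never actually used in Part 2: the structural property exploited on the $v$-side is the spectral-flatness constraint $v \in \Psi^{-1}\C_{\mu_2}$, which after the change of variables $y = \Psi v$ reduces to the purely deterministic statement that $y$ is spectrally $\mu_2$-flat. Choosing $\Psi = I_n$ is the cleanest instance of this situation, with the constraint reading $v \in \C_{\mu_2}$ directly, and it never introduces any new source of randomness that the proof must cope with.

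To see where the randomness enters (and, more importantly, where it does not), I would first expand
\[
\langle \hat{u}\hat{v}^\transpose, (\A^*\A - \id)(u v^\transpose)\rangle = \frac{n^2}{m}\sum_{\ell=1}^m \overline{\langle M_\ell,\hat{u}\hat{v}^\transpose\rangle}\,\langle M_\ell, u v^\transpose\rangle - \langle \hat{u}\hat{v}^\transpose, u v^\transpose\rangle,
\]
and use that each summand factors through the Fourier coordinates of $\Phi u$, $\Phi\hat u$, $v$, $\hat v$ at the sampled index $\omega_\ell$. With $\Psi = I_n$ the $v$-side enters only through $\norm{Fv}_\infty$ and $\norm{F\hat v}_\infty$, both controlled by $\sqrt{\mu_2/n}$ times the corresponding $\ell_2$ norms under the flatness constraint, exactly as required.

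Next, I would reproduce the uniform deviation bound from Part 2, obtained by combining a generic chaining (Dudley entropy) estimate over the product set $\tGamma_{s_1} \times (\tGamma_{s_2} \cap \C_{\mu_2})$ with a Bernstein-type tail bound per sample and Gaussian concentration in $\Phi$. The two sides contribute asymmetrically: the $v$-side carries the factor $\mu_2/n$ from flatness, scaled by the effective sparsity budget $s_1$ of the $u$-side; the $u$-side contributes an additional $s_2$ through the Gaussian coherence bound $\norm{F\Phi u}_\infty \lesssim \sqrt{s_1 \log n/n}$ that holds uniformly over $u \in \tGamma_{s_1}$ with high probability in $\Phi$. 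These scales combine to give $m \gtrsim \delta^{-2}(\mu_2 s_1 + s_2)\log^5 n$.

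The hard part will be the chaining step over the intersection $\tGamma_{s_2} \cap \C_{\mu_2}$: the flatness cone is nonconvex and not preserved by Minkowski sums, so the simple ``double-sparsity'' trick~(\ref{eq:inclusion}) is unavailable, and one cannot simply invoke an $(\calS,\delta)$-RIP on an enlarged set to deduce the RAP-type bound. I expect this to be handled in the proof of Theorem~\ref{thm:bds_rap} through a dyadic decomposition of the Fourier coefficients together with a polarization identity that reduces the RAP estimate to a RIP estimate on an appropriately enlarged deterministic set. Since that decomposition relies only on invertibility of $\Psi$ (trivially satisfied by $I_n$) and flatness of $\Psi v$, every step carries over verbatim to the case $\Psi = I_n$, and the corollary follows.
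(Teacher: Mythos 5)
Your proposal is correct and follows the paper's own route: the paper obtains the corollary precisely by observing, inside the proof of Theorem~\ref{thm:bds_rap}, that the only two facts the argument ever demands of $\Psi$ --- the restricted-isometry bound $\sup_{v\in B_2^n\cap\tGamma_{s_2}}|v^*(\Psi^*\Psi-I_n)v|\le\delta/2$ (which controls the bias term $\hat{u}^*u\,v^\transpose(\Psi^*\Psi-I_n)^\transpose\overline{\hat{v}}$ and the Frobenius diameter of $\Delta$) and the coherence bound on $\norm{F\Psi}_{1\to\infty}$ --- hold deterministically and trivially when $\Psi=I_n$, after which the proof conditions on $\Psi$ and uses only the Gaussianity of $\Phi$, exactly as you argue. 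One caution, though it does not affect the validity of the specialization: your sketch of the machinery inside Theorem~\ref{thm:bds_rap} is inaccurate in places --- the measurement functionals are \emph{not} independent, so there is no per-sample Bernstein bound (the paper instead applies the Krahmer--Mendelson--Rauhut chaos bound, extended to bilinear forms, to the whole matrix set $\Delta$); the $s_2$ term arises from the $\ell_\infty$-entropy of $F\Psi(B_2^n\cap\tGamma_{s_2})$ via $\norm{F\Psi}_{1\to\infty}$ (which for $\Psi=I_n$ equals $1/\sqrt{n}$), not from a bound on $\norm{F\Phi u}_\infty$; and there is no dyadic decomposition of Fourier coefficients --- the enlargement produced by polarization is a Minkowski sum of the matrix index sets and is tamed by the subadditivity of the $\gamma_2$ functional (Lemma~\ref{lemma:sa_gamma2}), not by reducing to a RIP on an enlarged signal set.
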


\begin{theorem}
\label{thm:bds_rop_cross}
There exist absolute numerical constants $C > 0$ and $\beta \in \mathbb{N}$ such that the following holds. 
Let $\Phi, \Psi \in \cz^{n \times n}$ be independent random matrices whose entries are i.i.d. following $CN(0,1/n)$.
Let $\A: \cz^{n \times n} \to \cz^m$ be defined in (\ref{eq:defcalA}).
If $m \geq C \delta^{-2} (\mu_2 s_1 + \mu_1 s_2) \log^5 n$, then with probability at least $1 - n^{-\beta}$,
\begin{align*}
\big| \langle \hat{u} \hat{v}^\transpose, \A^*\A (u v^\transpose) \rangle \big| \leq \delta
\norm{\hat{u} \hat{v}^\transpose}_{\mathrm{F}} \norm{u v^\transpose}_{\mathrm{F}}
\end{align*}
for all $u \in \tGamma_{s_1}$, $\hat{u} \in \tGamma_{s_1} \cap \C_{\mu_1}$,
$v \in \tGamma_{s_2} \cap \C_{\mu_2}$, and $\hat{v} \in \tGamma_{s_2}$ such that
$\langle u, \hat{u} \rangle = 0$ and $\langle v, \hat{v} \rangle = 0$.
\end{theorem}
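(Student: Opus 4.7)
The plan is to bound $\langle \hat u\hat v^\transpose,\A^*\A(uv^\transpose)\rangle$ directly by a concentration-plus-chaining argument in the independent Gaussian matrices $\Phi,\Psi$ and the i.i.d.\ sampling indices $\omega_1,\ldots,\omega_m$, parallel to the proof of Theorem~\ref{thm:bds_rap} but with the two orthogonality hypotheses in place of the subtraction of $\id$.

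First I would pass to Fourier coordinates. Writing $\tilde u=F\Phi u$, $\tilde v=F\Psi v$, and analogously for the hatted vectors, and using that $F\Phi,F\Psi$ remain i.i.d.\ $CN(0,1/n)$, the target takes the form
\begin{equation*}
\langle \A(\hat u\hat v^\transpose),\A(uv^\transpose)\rangle
=\frac{n}{m}\sum_{\ell=1}^{m}\sum_{k,k'=1}^{n} e^{2\pi i(\omega_\ell-1)(k-k')/n}\,\overline{\tilde{\hat u}_{k'}\tilde{\hat v}_{k'}}\,\tilde u_k\tilde v_k.
\end{equation*}
I would split the inner double sum as $T_{=}+T_{\ne}$ along $k=k'$ and $k\ne k'$. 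The off-diagonal piece $T_{\ne}$ has zero mean in the sampling, while $T_{=}=n\sum_k \overline{\tilde{\hat u}_k\tilde{\hat v}_k}\tilde u_k\tilde v_k$ is a sum of $n$ summands independent across $k$ whose mean factorises over $\Phi,\Psi$ as $\langle\hat u,u\rangle\langle\hat v,v\rangle$, and hence vanishes by the orthogonality hypotheses. Both centred pieces must then be bounded uniformly.

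Next I would decouple $\Phi$ and $\Psi$ using their independence. Conditioning on $\Psi$, $T_{\ne}$ is a decoupled Gaussian chaos of order two in the rows of $F\Phi$, with symbol $(W_{kk'}B_{kk'})_{k\ne k'}$, where $W_{kk'}=m^{-1}\sum_\ell e^{2\pi i(\omega_\ell-1)(k-k')/n}$ and $B_{kk'}=\tilde v_k\overline{\tilde{\hat v}_{k'}}$. A Hanson--Wright / matrix Bernstein bound produces a sub-exponential tail whose width is governed by the operator and Hilbert--Schmidt norms of this symbol; the flatness constraint $\hat u\in\C_{\mu_1}$ combined with $u\in\tGamma_{s_1}$ controls those norms through an interpolation between $\ell_\infty$ and $\ell_2$ and contributes the factor $\mu_1 s_2$. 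The symmetric step, in which we condition on $\Phi$ and use $v\in\C_{\mu_2}$ together with $\hat v\in\tGamma_{s_2}$, yields the $\mu_2 s_1$ term. A parallel but simpler Bernstein argument handles $T_{=}$, for which independence across $k$ is immediate. I would then upgrade the resulting pointwise tail bound to a uniform one by a chaining argument over the admissible four-tuples $(u,\hat u,v,\hat v)$: the sparsity cones $\tGamma_{s_i}$ contribute the classical $s_i\log n$ entropy and the flatness cones $\C_{\mu_i}$ enter only through the $\ell_\infty/\ell_2$ truncation they prescribe on the Fourier side, at a polylogarithmic cost in the chaining functional.

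The main obstacle, inherited from the proof of Theorem~\ref{thm:bds_rap}, is to organise this chaining so that the two flatness budgets $\mu_1,\mu_2$ appear additively as $\mu_1 s_2+\mu_2 s_1$ rather than multiplicatively as $\mu_1\mu_2 s_1 s_2$. This is possible precisely because the hypotheses of Theorem~\ref{thm:bds_rop_cross} place the two flatness constraints on opposite sides of the two rank-one factorisations, on $\hat u$ and on $v$; independence of $\Phi$ and $\Psi$ then supports a conditioning strategy in which only one of the two flatness parameters is active at a time, the other side being controlled purely by sparsity.
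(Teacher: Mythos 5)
Your proposal correctly identifies the central difficulty --- the quantity is a fourth-order Gaussian chaos, quadratic in $\Phi$ and quadratic in $\Psi$ --- but the plan you give for overcoming it does not work, and it misses the one idea the paper's proof turns on. You use the orthogonality hypotheses only to kill the mean. The paper uses them for something much stronger: since $\Phi u$ and $\Phi\hat u$ are jointly Gaussian and $\langle u,\hat u\rangle=0$ makes them uncorrelated, they are \emph{independent}, so $\Phi\hat u$ may be replaced in distribution by $\widetilde\Phi\hat u$ for an independent copy $\widetilde\Phi$ of $\Phi$, and likewise $\Psi v$ by $\widetilde\Psi v$. After conditioning on $\widetilde\Phi,\widetilde\Psi$ (which satisfy the RIP and $\norm{F\,\cdot\,}_{1\to\infty}$ bounds with high probability), the quantity becomes a genuine \emph{second-order} bilinear chaos $\langle L_{\hat u,\hat v}\xi_{\mathrm{L}}, R_{u,v}\xi_{\mathrm{R}}\rangle$ in the stacked Gaussian vector built from $\Phi$ and $\Psi$, with zero mean, to which Theorem~\ref{thm:ip} and the entropy estimates of Lemmas~\ref{lemma:est_id}, \ref{lemma:est_fourier}, and \ref{lemma:gamma2} apply exactly as in the proof of Theorem~\ref{thm:bds_rap}. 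This is also where the additive bound $\mu_2 s_1+\mu_1 s_2$ comes from: one flatness parameter controls $\Delta_{\mathrm{R}}$ and the other controls $\Delta_{\mathrm{L}}$, and Theorem~\ref{thm:ip} only ever takes a maximum of the two.

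By contrast, your plan --- condition on $\Psi$, apply Hanson--Wright to the quadratic form in $\Phi$, then ``chain over the admissible four-tuples'' --- leaves the supremum of the fourth-order chaos unresolved. After conditioning on $\Psi$ you still need uniformity over $(v,\hat v)$, so the symbol set of your $\Phi$-chaos is indexed by all four vectors and its chaining functional must be controlled jointly; undoing the conditioning then requires a statement that holds with high probability over $\Psi$ uniformly in this $\Psi$-dependent index set, which is the fourth-order suprema problem in disguise (the paper states explicitly that no off-the-shelf tool for this is available, which is why the reduction to second order is needed). Your closing remark that independence of $\Phi$ and $\Psi$ ``supports a conditioning strategy in which only one flatness parameter is active at a time'' is precisely the claim that needs proof, and nothing in the proposal supplies it. Two smaller points: the chaos in $\Phi$ after conditioning on $\Psi$ is not decoupled, since $F\Phi u$ and $F\Phi\hat u$ involve the same $\Phi$; and your appeal to zero mean ``in the sampling'' presumes a random $\Omega$, whereas the proof uses only the randomness of $\Phi$ and $\Psi$ with $\Omega$ fixed.
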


\begin{cor}
\label{cor:bds_rop_cross}
There exist absolute numerical constants $C > 0$ and $\beta \in \mathbb{N}$ such that the following holds. 
Let $\Phi, \Psi \in \cz^{n \times n}$ be independent random matrices whose entries are i.i.d. following $CN(0,1/n)$.
Let $\A: \cz^{n \times n} \to \cz^m$ be defined in (\ref{eq:defcalA}).
If $m \geq C \delta^{-2} (\mu_2 s_1 + \mu_1 s_2) \log^5 n$, then with probability at least $1 - n^{-\beta}$,
\begin{equation}
\label{eq:bds_rop}
\big| \langle \hat{u} \hat{v}^\transpose, \A^*\A (u v^\transpose) \rangle \big| \leq 2 \delta
\norm{\hat{u} \hat{v}^\transpose}_{\mathrm{F}} \norm{u v^\transpose}_{\mathrm{F}}
\end{equation}
for all $u \in \tGamma_{s_1}$, $\hat{u} \in \tGamma_{s_1} \cap \C_{\mu_1}$,
$v \in \tGamma_{s_2} \cap \C_{\mu_2}$, and $\hat{v} \in \tGamma_{s_2}$ such that
either $\langle u, \hat{u} \rangle = 0$ or $\langle v, \hat{v} \rangle = 0$.
\end{cor}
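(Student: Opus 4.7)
The plan is to reduce Corollary \ref{cor:bds_rop_cross} to Theorem \ref{thm:bds_rop_cross} by an orthogonal decomposition of the factor whose inner product does not vanish, followed by two applications of that theorem (or a minor extension of it) and a triangle inequality that is responsible for the factor $2\delta$ in (\ref{eq:bds_rop}). Without loss of generality I assume $\langle u,\hat u\rangle = 0$ with $\langle v,\hat v\rangle$ unrestricted; the other case is completely analogous. Set $\alpha := \langle \hat v, v\rangle/\|v\|_2^2$ and $w := \hat v - \alpha v$, so that $w \perp v$, $|\alpha|\,\|v\|_2 \leq \|\hat v\|_2$, and $\|w\|_2 \leq \|\hat v\|_2$. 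Sesquilinearity in the second slot of the Frobenius pairing yields
\[
\langle \hat u\hat v^\transpose, \A^*\A(uv^\transpose)\rangle
\;=\; \bar\alpha\,\langle \hat u v^\transpose, \A^*\A(uv^\transpose)\rangle
\;+\; \langle \hat u w^\transpose, \A^*\A(uv^\transpose)\rangle.
\]

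For the perpendicular summand both orthogonality hypotheses of Theorem \ref{thm:bds_rop_cross} are fulfilled ($\langle u,\hat u\rangle = 0$ and $\langle v,w\rangle = 0$), and the remaining membership conditions on $u,\hat u,v$ are inherited from those of the corollary. Combined with $\|w\|_2 \leq \|\hat v\|_2$ the theorem (or its extension, see below) gives $|\langle \hat u w^\transpose, \A^*\A(uv^\transpose)\rangle| \leq \delta\,\|\hat u\|_2\,\|w\|_2\,\|uv^\transpose\|_{\mathrm F} \leq \delta\,\|\hat u\hat v^\transpose\|_{\mathrm F}\,\|uv^\transpose\|_{\mathrm F}$. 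For the parallel summand the matrices $\hat u v^\transpose$ and $uv^\transpose$ are Frobenius-orthogonal (since $\langle u,\hat u\rangle = 0$), so $\langle \hat u v^\transpose, \A^*\A(uv^\transpose)\rangle = \langle \hat u v^\transpose,(\A^*\A - \id)(uv^\transpose)\rangle$; an analogous $\delta$-level bound on this ``diagonal'' inner product, paired with $|\bar\alpha|\,\|v\|_2 \leq \|\hat v\|_2$, again gives $\delta\,\|\hat u\hat v^\transpose\|_{\mathrm F}\,\|uv^\transpose\|_{\mathrm F}$. Adding the two contributions and applying the triangle inequality produces (\ref{eq:bds_rop}).

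The hard part, and where the real work sits, is that the residual $w = \hat v - \alpha v$ is not automatically in $\tGamma_{s_2}$: one only has the global estimate $\|w\|_1 \leq \|\hat v\|_1 + |\alpha|\,\|v\|_1 \leq 2\sqrt{s_2}\,\|\hat v\|_2$, which controls $\|w\|_1$ through $\|\hat v\|_2$ rather than through its own $\|w\|_2$. Hence Theorem \ref{thm:bds_rop_cross} cannot be invoked purely as a black box on the perpendicular piece; one must revisit its proof and verify that the underlying supremum — which in chaining-type arguments for such operators is indexed by a cone of vectors governed by a global $\ell_1/\ell_2$ ratio rather than by each vector's own ratio — already accommodates $w$ at the same sample complexity $m \geq C\delta^{-2}(\mu_2 s_1 + \mu_1 s_2)\log^5 n$. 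The same extension simultaneously supplies the diagonal estimate $|\langle \hat u v^\transpose,(\A^*\A - \id)(uv^\transpose)\rangle|\leq \delta\,\|\hat u\|_2\,\|v\|_2^2\,\|u\|_2$ needed for the parallel summand; with both in hand the triangle-inequality bookkeeping sketched above completes the proof.
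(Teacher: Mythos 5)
Your proposal is correct and follows essentially the same route as the paper: decompose $\hat v$ into its components along $v$ and orthogonal to $v$, bound the orthogonal piece by Theorem~\ref{thm:bds_rop_cross} (both orthogonality conditions now hold), bound the parallel piece by Theorem~\ref{thm:bds_rap} after noting that $\langle \hat u v^\transpose, uv^\transpose\rangle=0$, and add via the triangle inequality to obtain the factor $2\delta$. The one point where you go beyond the paper --- observing that the residual $w=\hat v-\alpha v$ need not lie in $\tGamma_{s_2}$ and that one must instead appeal to the fact that the underlying suprema are controlled over the larger cone $\sqrt{s_2}\,B_1^n\cap B_2^n$ (so that $w$ is accommodated after at most a constant inflation of $s_2$) --- is a legitimate subtlety that the paper's two-line proof silently elides, and your resolution of it is the right one.
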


\begin{proof}[Proof of Corollary~\ref{cor:bds_rop_cross}]
If suffices to consider the case where $\langle u, \hat{u} \rangle = 0$.
Due to the homogeneity of (\ref{eq:bds_rop}), without loss of generality, we may assume $\norm{v}_2 = \norm{\hat{v}}_2 = 1$.
Decompose $\hat{v}$ as $\hat{v} = P_{R(v)} \hat{v} + P_{R(v)^\perp} \hat{v}$.
Then, $P_{R(v)} \hat{v} = \alpha v$ for $\alpha \in \cz$ satisfying $|\alpha| \leq 1$.
\[
\begin{aligned}
\big| \langle \hat{u} \hat{v}^\transpose, \A^*\A (u v^\transpose) \rangle \big|
{} & \leq \big| \langle \alpha \hat{u} v^\transpose, \A^*\A (u v^\transpose) \rangle \big|
+ \big| \langle \hat{u} (P_{R(v)^\perp} \hat{v})^\transpose, \A^*\A (u v^\transpose) \rangle \big| \\
{} & \leq \delta |\alpha| \norm{\hat{u} \hat{v}^\transpose}_{\mathrm{F}} \norm{u v^\transpose}_{\mathrm{F}}
+ \delta \norm{\hat{u} (P_{R(v)^\perp} \hat{v})^\transpose}_{\mathrm{F}} \norm{u v^\transpose}_{\mathrm{F}} \\
{} & \leq 2 \delta \norm{\hat{u} \hat{v}^\transpose}_{\mathrm{F}} \norm{u v^\transpose}_{\mathrm{F}},
\end{aligned}
\]
where the second step follows from Theorems~\ref{thm:bds_rap} and \ref{thm:bds_rop_cross}.
\end{proof}

The above results combined with their implications in a companion paper \cite{LeeLJB2015} provide
performance guarantees for subsampled blind deconvolution at near optimal sample complexity of $m = O((\mu_1 s_2 + \mu_2 s_1)\log^5 n)$.

Note that Theorem~\ref{thm:bds_rap} derives a sufficient condition respectively
for the $(\widetilde{\calS}_1,\widetilde{\calS}_1,\delta)$-RAP and the $(\widetilde{\calS}_2,\widetilde{\calS}_2,\delta)$-RAP of $\A$,
where $\widetilde{\calS}_1$ and $\widetilde{\calS}_2$ are defined by
\begin{align*}
\widetilde{\calS}_1 {} & := \{ u v^\transpose \in \cz^{n \times n} :\pl u \in \tGamma_{s_1} \cap \Phi^{-1} \C_{\mu_1}, \pl v \in \tGamma_{s_2} \}, \\
\widetilde{\calS}_2 {} & := \{ u v^\transpose \in \cz^{n \times n} :\pl u \in \tGamma_{s_1}, \pl v \in \tGamma_{s_2} \cap \Psi^{-1} \C_{\mu_2} \}.
\end{align*}
On the other hand, Corollary~\ref{cor:bds_rop_cross} derives a sufficient condition for the $(\widetilde{\calS}_1,\widetilde{\calS}_2,2\delta)$-ROP of $\A$.

The derivations of these RIP-like properties are significantly different from the previous RIP analyses in the following senses:
i) In general, a restrictive set does not satisfy an inclusion property like (\ref{eq:inclusion}).
The restrictive sets $\widetilde{\calS}_1$ and $\widetilde{\calS}_2$, induced from both the sparsity and spectral flatness, correspond to this case.
The non-convex cone structure induced from a nonnegativity prior is yet another example for this case.
Therefore, RIP-like properties are not directly implied by the corresponding RIP,
and it is necessary to derive RIP-like properties independently.
ii) More difficulties arise from the subsampling in the time domain following the convolution.
In particular, the random measurement functionals are not mutually independent, which was one of the crucial assumptions in previous RIP analyses.
Technically, deriving the $(\widetilde{\calS}_1,\widetilde{\calS}_2)$-RAP in Theorem~\ref{thm:bds_rop_cross} involves
bonding the deviation of a fourth-order chaos process.
We exploit the total orthogonality assumed in Theorem~\ref{thm:bds_rop_cross} to avoid such a complicated scenario.

Recall that Theorems~\ref{thm:bds_rap} and \ref{thm:bds_rop_cross} consider an approximate sparsity model
that covers a wider set $\tGamma_s$ than the set $\Gamma_s$ of exactly $s$-sparse vectors.
During the proofs, we also provide extensions of conventional RIP analysis of an i.i.d. subgaussian sensing matrix and partial Fourier sensing matrix in compressed sensing as side results, which might be of independent interest.

The rest of this paper is organized as follows:
In Section~\ref{sec:chaos}, we extend the previous work on suprema of chaos processes by Krahmer et al. \cite{krahmer2014suprema}
from a quadratic form to a bilinear form.
Key entropy estimates are derived in Section~\ref{sec:key_est} along with
their applications to showing the RIP of random matrices for approximately sparse vectors.
In Section~\ref{sec:proofs_main}, the proofs for the main theorems are presented.
Then, we conclude the paper with discussions.

\subsection{Notations}

Various norms are used in this paper.
The Frobenius norm of a matrix is denoted by $\norm{\cdot}_{\mathrm{F}}$.
The operator norm from $\ell_p^n$ to $\ell_q^n$ will be $\norm{\cdot}_{p\to q}$.
Absolute constants will be used throughout the paper.
Symbols $C,c_1,c_2,\ldots$ are reserved for real-valued positive absolute constants.
Symbols $\beta \in \mathbb{N}$ is a positive integer absolute constant.
For a matrix $A$, its element-wise complex conjugate, its transpose, and its Hermitian transpose
are respectively written as $\overline{A}$, $A^\transpose$, and $A^*$.
For a linear operator $\A$ between two vector spaces, $\A$ will denote its adjoint operator.
The matrix inner product $\trace(A^*B)$ between two matrices $A$ and $B$ is denoted by $\langle A, B \rangle$.
Matrix $F \in \cz^{n \times n}$ will represent the unitary discrete Fourier transform
and $\conv$ stands for the circular convolution where its length is clear from the context.
We will use the shorthand notation $[n] = \{1,2, \ldots, n\}$.
Let $J \subset [n]$. Then, $\Pi_J: \cz^n \to \cz^n$ denotes the coordinate projection
whose action on a vector $x$ keeps the entries of $x$ indexed by $J$ and sets the remaining entries to zero.
The identity map on $\cz^{n \times n}$ will be denoted by $\id$.

\section{Suprema of Chaos Processes}
\label{sec:chaos}

\subsection{Covering number and dyadic entropy number}

Let $B, D \subset X$ be convex sets where $X$ is a Banach space.
The $\epsilon$-covering number, denoted by $N(B,\epsilon D)$, is defined as
\[
N(B, \epsilon D) := \inf
\left\{ k \in \mathbb{N} \pl | \pl \exists (x_i)_{i=1}^k \subset X \pl \text{s.t.} \pl B \subset \bigcup_{i=1}^k x_i+\epsilon D \right\}.
\]
The $k$th dyadic entropy number, denoted by $e_k(B, D)$, is defined as
\[
e_k(B, D) := \inf
\left\{\epsilon > 0 \pl \big| \pl \exists (x_i)_{i=1}^{2^{k-1}} \subset X \pl \text{s.t.} \pl B \subset \bigcup_{i=1}^{2^{k-1}}x_i+\epsilon D
\right\}.
\]

Then, the covering number and dyadic entropy number satisfy
\begin{equation}
\label{eq:intN2sumek}
\int_0^\infty \sqrt{ \log N(B,\epsilon D) } \pl d\epsilon
\lesssim \sum_{k=1}^\infty \frac{e_k(B, D)}{\sqrt{k}}.
\end{equation}

Indeed, the inequality in (\ref{eq:intN2sumek}) is derived as follows:
\begin{align*}
{} & \int_0^\infty \sqrt{ \log N(B,\epsilon D) } \pl d\epsilon \\
{} & = \sum_{k=1}^\infty \int_{e_k(B, D)}^{e_{k+1}(B, D)} \sqrt{ \log N(B,\epsilon D) } \pl d\epsilon \\
{} & \leq \sum_{k=1}^\infty \int_{e_k(B, D)}^{e_{k+1}(B, D)} \sqrt{\log 2} \sqrt{k} \pl d\epsilon \\
{} & = \sqrt{\log 2} \sum_{k=1}^\infty \left[e_k(B, D) - e_{k+1}(B, D)\right] \sqrt{k} \\
{} & = \sqrt{\log 2} \sum_{k=1}^\infty (\sqrt{k} - \sqrt{k-1}) e_k(B, D) \\
{} & \leq \sqrt{\log 2} \sum_{k=1}^\infty \frac{e_k(B, D)}{\sqrt{k}}.
\end{align*}

\subsection{Subadditivity of $\gamma_2$ functional}

Let $(T,d)$ be a metric space.
An admissible sequence of $T$, denoted by $\{T_r\}_{r=0}^\infty$,
is a collection of subsets of $T$ that satisfies $|T_0| = 1$ and $|T_r| \leq 2^{2^r}$ for all $r \geq 1$.
The $\gamma_2$ functional \cite{talagrand2005generic} is defined by
\[
\gamma_2(T,d) := \inf_{\{T_r\}} \sup_{t \in T} \sum_{r=0}^\infty 2^{r/2} d(t,T_r) \pl.
\]

\begin{lemma}
\label{lemma:sa_gamma2}
Let $(T,d)$ and $(S,d)$ be metric spaces embedded in a common vector space. Then,
\[
\gamma_2(T+S,d) \leq (1+\sqrt{2}) (\gamma_2(T,d) + \gamma_2(S,d)) \pl.
\]
\end{lemma}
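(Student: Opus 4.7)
The plan is to build an admissible sequence for $T+S$ by taking Minkowski sums of near-optimal admissible sequences for $T$ and for $S$, shifted by one index to respect the size constraint $|U_r| \leq 2^{2^r}$. The constant $1+\sqrt{2}$ will drop out naturally from this shift.

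Fixing $\varepsilon > 0$, I would pick admissible sequences $\{T_r\}_{r \geq 0}$ and $\{S_r\}_{r \geq 0}$ realizing $\sup_{t \in T} \sum_{r \geq 0} 2^{r/2} d(t, T_r) \leq \gamma_2(T,d) + \varepsilon$ and the analogous bound for $S$. The naive guess $U_r := T_r + S_r$ only satisfies $|U_r| \leq 2^{2^r}\cdot 2^{2^r} = 2^{2^{r+1}}$, one dyadic level too large. The fix is to shift: set $U_0 := T_0 + S_0$ (automatically a singleton since $|T_0|=|S_0|=1$) and $U_{r+1} := T_r + S_r$ for $r \geq 0$, so that $|U_r| \leq 2^{2^{r-1}} \cdot 2^{2^{r-1}} = 2^{2^r}$ for $r \geq 1$ and $\{U_r\}$ is admissible for $T+S$.

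Since $T,S$ lie in a common vector space and share the metric $d$, which I interpret as translation invariant (e.g., $d(\cdot,\cdot) = \|\cdot-\cdot\|$ for an ambient norm), the triangle inequality yields $d(t+s, U_{r+1}) \leq d(t, T_r) + d(s, S_r)$ for every $r \geq 0$, and likewise $d(t+s, U_0) \leq d(t, T_0) + d(s, S_0)$. Reindexing the $\gamma_2$-sum then gives
$\sum_{r \geq 0} 2^{r/2} d(t+s, U_r) = d(t+s, U_0) + \sqrt{2} \sum_{r \geq 0} 2^{r/2} d(t+s, U_{r+1}) \leq (1+\sqrt{2}) \sum_{r \geq 0} 2^{r/2} \bigl(d(t, T_r) + d(s, S_r)\bigr)$,
the constant $1+\sqrt{2}$ absorbing the $r=0$ contribution alongside the $\sqrt{2}$ from the index shift. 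For any $w \in T+S$, choosing any decomposition $w = t+s$ and supremizing — which separates into independent suprema over $t \in T$ and $s \in S$ because the right-hand bound is additive — yields $\gamma_2(T+S,d) \leq (1+\sqrt{2})(\gamma_2(T,d) + \gamma_2(S,d) + 2\varepsilon)$; letting $\varepsilon \to 0$ closes the argument.

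I do not anticipate a substantive obstacle. The only sensitivities are the index shift that admissibility forces (it pins down the constant $\sqrt{2}$) and the reliance on translation invariance of the ambient metric, without which the Minkowski sum would not interact cleanly with $d$.
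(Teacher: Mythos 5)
Your proof is correct and follows essentially the same route as the paper's: the identical shifted Minkowski-sum construction of the admissible sequence ($U_0 = T_0+S_0$, $U_{r+1}=T_r+S_r$), the same admissibility count, and the same reindexing that produces the constant $1+\sqrt{2}$. Your explicit remark that the triangle-inequality step $d(t+s,\,T_r+S_r)\leq d(t,T_r)+d(s,S_r)$ relies on translation invariance of the ambient metric is a small point of extra care the paper glosses over, but the argument is otherwise the same.
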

\begin{proof}
Let $\{T_r\}_{r=0}^\infty$ and $\{S_r\}_{r=0}^\infty$ denote admissible sequences for $T$ and $S$, respectively.
Define $\{R_r\}_{r=0}^\infty$ by $R_0 = T_0 + S_0$ and $R_r = T_{r-1} + S_{r-1}$ for $r \geq 1$.
Then, $R_r \subset T+S$ for all $r \geq 0$, and $\{R_r\}_{r=0}^\infty$ satisfies
$|R_0| = 1$ and $|R_r| = |T_{r-1}| |S_{r-1}| \leq 2^{2^{r-1}} 2^{2^{r-1}} = 2^{2^r}$ for all $r \geq 1$.
This implies that $\{R_r\}_{r=0}^\infty$ is an admissible sequence of $T+S$.
By the definition of the $\gamma_2$ functional, we have
\begin{align*}
\gamma_2(T+S,d)
{} & \leq \sup_{t \in T, s \in S} \sum_{r=0}^\infty 2^{r/2} d(t+s,R_r) \\
{} & = \sup_{t \in T, s \in S} d(t+s,T_0+S_0) + \sum_{r=1}^\infty 2^{r/2} d(t+s,T_{r-1}+S_{r-1}) \\
{} & = \sup_{t \in T, s \in S} (1+\sqrt{2}) \sum_{r=1}^\infty 2^{(r-1)/2} d(t+s,T_{r-1}+S_{r-1}) \\
{} & \leq \sup_{t \in T, s \in S} (1+\sqrt{2}) \sum_{r=1}^\infty 2^{(r-1)/2} \left\{ d(t,T_{r-1}) + d(s,S_{r-1}) \right\} \\
{} & = (1+\sqrt{2}) \left\{ \sup_{t \in T} \sum_{r=0}^\infty 2^{r/2} d(t,T_r) + \sup_{s \in S} \sum_{r=0}^\infty 2^{r/2} d(s,S_r) \right\} \pl,
\end{align*}
where the second inequality holds because the metric $d$ satisfies the triangle inequality.
Since the choice of admissible sequences $\{T_r\}_{r=0}^\infty$ and $\{S_r\}_{r=0}^\infty$ was arbitrary,
by taking the infimum with respect to $\{T_r\}_{r=0}^\infty$ and $\{S_r\}_{r=0}^\infty$, we get the desired inequality.
\end{proof}

\subsection{Suprema of chaos processes: bilinear forms}

Krahmer et al. \cite{krahmer2014suprema} showed the concentration of a subgaussian quadratic form.

\begin{theorem}[{\cite[Theorem~3.1]{krahmer2014suprema}}]
\label{thm:kmr}
Let $\xi \in \cz^n$ be an $L$-subgaussian vector with $\mathbb{E} \xi \xi^* = I_n$.
Let $\Delta \subset \cz^{m \times n}$.
Then for $t > 0$,
\[
\mathbb{P}\left(
\sup_{M \in \Delta} \left| \norm{M \xi}_2^2 - \mathbb{E} \norm{M \xi}_2^2 \right|
\geq c_1 K_1(\Delta) + t \pl
\right)
\leq 2 \exp
\left(
-c_2
\min
\left\{
\frac{t^2}{[K_2(\Delta)]^2}, \frac{t}{K_3(\Delta)}
\right\}
\right) \pl,
\]
where $c_1$ and $c_2$ are constants that only depend on $L$, and $K_1$, $K_2$, and $K_3$ are given by
\begin{align*}
K_1(\Delta) {} & := \gamma_2(\Delta, \norm{\cdot}_{2\to2})
\left[ \gamma_2(\Delta, \norm{\cdot}_{2\to2}) + d_{\mathrm{F}}(\Delta) \right]
+ d_{\mathrm{F}}(\Delta) d_{2\to2}(\Delta), \\
K_2(\Delta) {} & := d_{2\to2}(\Delta) \left[ \gamma_2(\Delta, \norm{\cdot}_{2\to2}) + d_{\mathrm{F}}(\Delta) \right], \\
K_3(\Delta) {} & := d_{2\to2}^2(\Delta).
\end{align*}
\end{theorem}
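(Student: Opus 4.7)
The plan is to establish this via the generic chaining framework for second-order chaos processes with mixed sub-Gaussian and sub-exponential increments, essentially extending the classical Hanson-Wright inequality uniformly over the index set $\Delta$. Write $X_M := \|M\xi\|_2^2 - \mathbb{E}\|M\xi\|_2^2 = \xi^*(M^*M)\xi - \operatorname{tr}(M^*M)$, so that we are controlling the supremum over $M \in \Delta$ of a centered quadratic form in $\xi$.

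First I would establish the pointwise Hanson-Wright bound: for any fixed $M$, decoupling $\xi^*(M^*M)\xi$ against an independent copy $\tilde\xi$ (at the cost of a universal constant) and using $L$-sub-Gaussianity yields, for every $p \geq 1$,
\[
\|X_M\|_{L_p} \lesssim \sqrt{p}\,\|M^*M\|_{\mathrm F} + p\,\|M^*M\|_{2\to 2}.
\]
Applied to increments $X_M-X_{M'}$, via the factorisation $M^*M - M'^*M' = (M-M')^*M + M'^*(M-M')$, this gives two natural pseudometrics on $\Delta$: the operator-norm metric $\|M-M'\|_{2\to 2}$ weighted by $d_{2\to 2}(\Delta)$ producing sub-Gaussian increments, and the same metric producing sub-exponential increments when multiplied by itself.

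Second, I would invoke a Bernstein-type generic chaining tail bound (in the spirit of Talagrand or Dirksen) that integrates these mixed increments against a single admissible sequence. Splitting the chain into a sub-Gaussian regime and a sub-exponential regime at the level determined by $d_{2\to 2}(\Delta)$, the sub-Gaussian contribution integrates to $\gamma_2(\Delta, \|\cdot\|_{2\to 2})$ scaled by $d_{2\to 2}(\Delta)$ or $d_{\mathrm F}(\Delta)$, whereas the sub-exponential contribution contributes $\gamma_1$-type terms which, using $\gamma_1 \lesssim \gamma_2 \cdot \mathrm{diam}$, reduce to the expressions appearing in $K_1, K_2, K_3$. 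Combining the expectation bound $\mathbb{E}\sup_M |X_M| \lesssim K_1(\Delta)$ with a Talagrand-style deviation inequality for the centered supremum yields the stated mixed exponential tail.

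The hardest part, I expect, is carrying the chaining out with the \emph{pair} of increments simultaneously: one has to ensure that only $d_{\mathrm F}(\Delta)$ (a diameter, not an entropy integral) appears as the multiplier of $\gamma_2(\Delta,\|\cdot\|_{2\to 2})$ in the cross term of $K_1$, rather than an extra $\gamma_2$ in a Frobenius-type metric that would inflate the bound. This requires a delicate single-scale cutoff between the Gaussian and exponential regimes, together with a careful use of the sub-multiplicativity $\|M^*M\|_{\mathrm F} \leq \|M\|_{\mathrm F}\|M\|_{2\to 2}$ to convert Schatten-4 type quantities into the Frobenius diameter. This is the technical heart of the Krahmer-Mendelson-Rauhut statement and the step where I would most expect to have to import their machinery rather than reconstruct it from scratch.
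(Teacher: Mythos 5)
This statement is not proved in the paper at all: it is quoted verbatim from Krahmer--Mendelson--Rauhut as \cite[Theorem~3.1]{krahmer2014suprema}, and the paper's own work starts only afterwards (Theorem~\ref{thm:ip}), where the \emph{moment} version \cite[Theorem~3.5]{krahmer2014suprema} is taken as a black box and extended to bilinear forms by polarization. So there is no in-paper argument to compare yours against; the only meaningful comparison is with the original source.

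Measured against that source, your sketch has the right skeleton (Hanson--Wright-type moment bounds on increments, generic chaining with mixed sub-Gaussian/sub-exponential tails, Markov on the $p$th moment), but the step where you dispose of the sub-exponential contribution is wrong as stated. The inequality $\gamma_1 \lesssim \gamma_2 \cdot \mathrm{diam}$ is false -- it is already dimensionally inconsistent, and since $2^r \geq 2^{r/2}$ one always has $\gamma_1(T,d) \geq \gamma_2(T,d)$, with $\gamma_1$ exceeding $\gamma_2$ by a factor as large as $\sqrt{\log |T|}$ for a well-separated finite set. A direct mixed-tail chaining on $X_M = \norm{M\xi}_2^2 - \mathbb{E}\norm{M\xi}_2^2$ therefore leaves you with a term $d_{2\to2}(\Delta)\,\gamma_1(\Delta,\norm{\cdot}_{2\to2})$ that does not in general collapse to the expression $K_1(\Delta)$. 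The actual Krahmer--Mendelson--Rauhut proof avoids $\gamma_1$ altogether: it first bounds the moments of $\sup_{M}\norm{M\xi}_2$ (not its square) by $\gamma_2(\Delta,\norm{\cdot}_{2\to2}) + d_{\mathrm{F}}(\Delta) + \sqrt{p}\,d_{2\to2}(\Delta)$ via a purely sub-Gaussian chain, and then exploits the multiplicative structure $\norm{M\xi}_2^2 - \norm{M'\xi}_2^2 = \mathrm{Re}\,\langle (M-M')\xi,\,(M+M')\xi\rangle$ to square that estimate; this is precisely where the $\gamma_2^2$ term in $K_1$ comes from. You correctly identify this as the technical heart you would have to import, so the proposal is honest about its gap, but the specific reduction you offer in its place would not close it.
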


Our main observation here is that a simple application of the polarization identity provides
the extension of the concentration result by Krahmer et al. \cite{krahmer2014suprema}
from a subgaussian quadratic form to a subgaussian bilinear form.
Note that a quadratic form is a special case of a bilinear form.

\begin{theorem}
\label{thm:ip}
Let $\xi \in \cz^n$ be an $L$-subgaussian vector with $\mathbb{E} \xi \xi^* = I_n$.
Let $\Delta,\Delta' \subset \cz^{m \times n}$.
Then for $t > 0$,
\begin{align*}
{} & \mathbb{P}
\Bigg(
\sup_{M \in \Delta, M' \in \Delta'} \big| \langle M' \xi, M \xi \rangle - \mathbb{E} \langle M' \xi, M \xi \rangle \big|
\geq c_1 \max\{K_1(\Delta), K_1(\Delta')\} + t \pl
\Bigg) \\
{} & \quad \leq 8 \exp
\Bigg(
-c_2
\min
\Bigg\{
\frac{t^2}{[\max\{K_2(\Delta), K_2(\Delta')\}]^2}, \frac{t}{\max\{K_3(\Delta), K_3(\Delta')\}}
\Bigg\}
\Bigg) \pl,
\end{align*}
where $c_1$ and $c_2$ are constants that only depend on $L$, and $K_1$, $K_2$, and $K_3$ are defined in Theorem~\ref{thm:kmr}.
\end{theorem}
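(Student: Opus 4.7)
The plan is to reduce the bilinear concentration statement to four applications of Theorem~\ref{thm:kmr} via the complex polarization identity. For $a,b$ in a complex inner product space one has
\[
\langle b, a \rangle = \frac{1}{4} \sum_{k=0}^{3} i^{k} \bigl\| a + i^{k} b \bigr\|_2^2 ,
\]
so setting $a = M\xi$ and $b = M'\xi$ gives
\[
\langle M' \xi, M \xi \rangle = \frac{1}{4} \sum_{k=0}^{3} i^{k} \bigl\| (M + i^{k} M') \xi \bigr\|_2^2 .
\]
Taking expectations and subtracting, the centered bilinear form is a signed combination of four centered quadratic forms indexed by $k \in \{0,1,2,3\}$, each coefficient of modulus $1/4$. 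Taking the supremum and applying the triangle inequality,
\[
\sup_{M \in \Delta, M' \in \Delta'} \bigl| \langle M'\xi, M\xi\rangle - \mathbb{E}\langle M'\xi, M\xi\rangle \bigr|
\leq \frac{1}{4} \sum_{k=0}^{3} \sup_{N \in \Delta_k} \bigl| \norm{N\xi}_2^2 - \mathbb{E}\norm{N\xi}_2^2 \bigr| ,
\]
where $\Delta_k := \Delta + i^k \Delta' \subset \cz^{m\times n}$.

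Next I would invoke Theorem~\ref{thm:kmr} on each $\Delta_k$ with deviation parameter $t/4$ (or, after a harmless rescaling of $c_1, c_2$, simply with $t$), then union bound over the four values of $k$; this accounts for the prefactor $8 = 2 \cdot 4$. The remaining ingredient is to show that $K_i(\Delta_k) \leq C \max\{K_i(\Delta), K_i(\Delta')\}$ for absolute constants. Because $\norm{\cdot}_{2\to 2}$ is invariant under multiplication by the unit scalar $i^k$, Lemma~\ref{lemma:sa_gamma2} (subadditivity of $\gamma_2$) yields
\[
\gamma_2(\Delta_k, \norm{\cdot}_{2\to 2}) \leq (1+\sqrt{2}) \bigl( \gamma_2(\Delta, \norm{\cdot}_{2\to 2}) + \gamma_2(\Delta', \norm{\cdot}_{2\to 2}) \bigr),
\]
and the triangle inequality gives $d_{\mathrm{F}}(\Delta_k) \leq d_{\mathrm{F}}(\Delta) + d_{\mathrm{F}}(\Delta')$ and $d_{2\to 2}(\Delta_k) \leq d_{2\to 2}(\Delta) + d_{2\to 2}(\Delta')$. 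Plugging these bounds into the definitions of $K_1,K_2,K_3$ and using $a+b \leq 2\max\{a,b\}$ produces the desired control by $\max\{K_i(\Delta), K_i(\Delta')\}$, with the extra multiplicative factors absorbed into the constants $c_1, c_2$ (which are already permitted to depend on $L$).

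Essentially no new probabilistic machinery is needed; the argument is a standard polarization-plus-union-bound reduction. The one step to double-check is that the tail parameter $t$ is routed through all four applications cleanly: applying Theorem~\ref{thm:kmr} to each $\Delta_k$ with threshold $t$ bounds each summand by $c_1 K_1(\Delta_k) + t$, so the union of the four good events implies that the supremum on the left is at most $c_1 \max_k K_1(\Delta_k) + t$ (after absorbing the factor $1/4$ and the constant $4$ from summation into $c_1$). I do not anticipate a serious obstacle; the only bookkeeping burden is verifying that the rescaling of constants preserves the $\min\{t^2/(\cdot)^2, t/(\cdot)\}$ structure inside the exponential, which follows because all four tail bounds share the same functional form with parameters controlled by $\max\{K_i(\Delta), K_i(\Delta')\}$.
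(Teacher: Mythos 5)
Your proposal is correct and follows essentially the same route as the paper: both reduce the bilinear form via the complex polarization identity to four quadratic chaos processes indexed by sets of the form $\Delta + \alpha \Delta'$ with $|\alpha| = 1$, and both control the resulting $\gamma_2$ functionals and radii through Lemma~\ref{lemma:sa_gamma2} and the triangle inequality for $d_{\mathrm{F}}$ and $d_{2\to2}$. The only difference is bookkeeping: the paper carries out the polarization at the level of $L_p$ moments (via \cite[Theorem~3.5]{krahmer2014suprema} and the subadditivity of the moment functional in Lemma~\ref{lemma:rs}) and applies Markov's inequality once at the end, whereas you apply the tail bound of Theorem~\ref{thm:kmr} to each of the four sets and take a union bound, which is equally valid and has the minor virtue of directly explaining the prefactor $8 = 4 \cdot 2$.
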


\begin{proof}[Proof of Theorem~\ref{thm:ip}]
The main result in \cite[Theorem~3.5]{krahmer2014suprema} states that for a collection of self-adjoint matrices $\Delta$
\begin{equation}
\label{KMR}
(\ez \sup_{M \in \Delta}  | \norm{M \xi}^2 - \ez \norm{M \xi}^2 |^p)^{1/p} \lesssim \textup{\textsf{rs}}_p(\Delta) \pl ,
\end{equation}
where the terms $\textup{\textsf{rs}}_p(\Delta)$ is defined by
\begin{equation}
\label{eq:defrs}
\begin{aligned}
\textup{\textsf{rs}}_p(\Delta) :=
{} & \gamma_2(\Delta,\norm{\cdot}_{2\to2}) (\gamma_2(\Delta,\norm{\cdot}_{2\to2}) + d_{\mathrm{F}}(\Delta)) \\
{} & \quad + \sqrt{p} d_{2\to2}(\Delta) (\gamma_2(\Delta,\norm{\cdot}_{2\to2}) + d_{\mathrm{F}}(\Delta))
+ p d_{2\to2}^2(\Delta) \pl .
\end{aligned}
\end{equation}

By the polarization identity and the subadditivity of $\textup{\textsf{rs}}_p(\Delta)$ with respect to the Minkowski sum (Lemma~\ref{lemma:rs}),
we extend \cite[Theorem~3.5]{krahmer2014suprema} to the bilinear case, which is summarized in Lemma~\ref{lemma:ip}.

The next step of applying Markov's inequality to the $p$th moment in the proof of Theorem~\ref{thm:kmr} applies here without modification,
which competes the proof.
\end{proof}

\begin{lemma}
\label{lemma:rs}
Let $\textup{\textsf{rs}}_p$ be as defined in (\ref{eq:defrs}).
For every complex number $\alpha$ of unit modulus,
\[
\textup{\textsf{rs}}_p(\Delta + \alpha \Delta') \lesssim \max(\textup{\textsf{rs}}_p(\Delta), \textup{\textsf{rs}}_p(\Delta')) \pl.
\]
\end{lemma}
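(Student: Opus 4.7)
The plan is to reduce the Minkowski subadditivity of $\textsf{rs}_p$ to the separate subadditivity of its three building blocks $\gamma_2(\cdot, \norm{\cdot}_{2\to 2})$, $d_F(\cdot)$, and $d_{2\to 2}(\cdot)$, and then to absorb the resulting cross terms into $\textsf{rs}_p(\Delta)$ and $\textsf{rs}_p(\Delta')$ via elementary and weighted AM-GM.

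The first step is to record the subadditivity of each constituent quantity. Because $|\alpha|=1$, every occurrence of $\alpha\Delta'$ can be replaced by $\Delta'$ inside the norms $\norm{\cdot}_F$ and $\norm{\cdot}_{2\to 2}$ as well as the metric $\norm{\cdot}_{2\to 2}$. The triangle inequality applied to the supremum over the Minkowski sum then yields
\[
d_F(\Delta + \alpha\Delta') \leq d_F(\Delta) + d_F(\Delta'), \qquad d_{2\to 2}(\Delta + \alpha\Delta') \leq d_{2\to 2}(\Delta) + d_{2\to 2}(\Delta'),
\]
while Lemma~\ref{lemma:sa_gamma2} supplies
\[
\gamma_2(\Delta + \alpha\Delta', \norm{\cdot}_{2\to 2}) \leq (1+\sqrt{2})\bigl(\gamma_2(\Delta, \norm{\cdot}_{2\to 2}) + \gamma_2(\Delta', \norm{\cdot}_{2\to 2})\bigr).
\]

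Next I would substitute these three bounds into (\ref{eq:defrs}) and distribute the products. The resulting sum splits into \emph{diagonal} monomials (whose two factors come from the same $\Delta$) and \emph{cross} monomials (one factor from each set). The diagonal monomials reproduce $\textsf{rs}_p(\Delta)$ and $\textsf{rs}_p(\Delta')$ up to an absolute constant coming from $(1+\sqrt{2})^2$. For each cross monomial $X_\Delta\,Y_{\Delta'}$, I would apply AM-GM in the form $XY \leq \tfrac{1}{2}(X^2+Y^2)$ and identify each resulting square with a summand of the corresponding $\textsf{rs}_p$. For aligned pairs this is direct: $\gamma_2(\Delta)\gamma_2(\Delta') \leq \tfrac12(\gamma_2(\Delta)^2 + \gamma_2(\Delta')^2) \leq \tfrac12(\textsf{rs}_p(\Delta) + \textsf{rs}_p(\Delta'))$, and analogously for $p\,d_{2\to 2}(\Delta)\,d_{2\to 2}(\Delta')$ via the summand $p\,d_{2\to 2}^2$.

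The main obstacle is the \emph{mismatched} cross monomials such as $\gamma_2(\Delta)\,d_F(\Delta')$ and $\sqrt{p}\,d_{2\to 2}(\Delta)\,d_F(\Delta')$, for which neither factor appears squared inside $\textsf{rs}_p$. These must be controlled by a weighted AM-GM $X_\Delta Y_{\Delta'} \leq \tfrac{\lambda}{2}X_\Delta^2 + \tfrac{1}{2\lambda}Y_{\Delta'}^2$ with $\lambda$ tuned so that each half is absorbed into one of the mixed summands $\gamma_2\,d_F$ or $\sqrt{p}\,d_{2\to 2}\,d_F$ that already appear in the respective $\textsf{rs}_p$. Summing the contributions of all nine monomials and tracking constants yields a bound of the form $\textsf{rs}_p(\Delta + \alpha\Delta') \lesssim \textsf{rs}_p(\Delta) + \textsf{rs}_p(\Delta')$, which, since $X + Y \leq 2\max(X, Y)$ for nonnegative $X, Y$, gives the claimed bound $\textsf{rs}_p(\Delta + \alpha\Delta') \lesssim \max(\textsf{rs}_p(\Delta), \textsf{rs}_p(\Delta'))$.
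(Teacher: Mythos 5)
Your overall route is the same as the paper's: establish subadditivity of the three constituents ($\gamma_2$ via Lemma~\ref{lemma:sa_gamma2}, and $d_{\mathrm{F}}$, $d_{2\to2}$ via the triangle inequality) and substitute into the definition (\ref{eq:defrs}). You are in fact more careful than the paper, whose proof stops at ``the assertion follows by applying these results to the definition'': you correctly isolate the mismatched cross monomials as the real difficulty. But your proposed resolution of that difficulty does not work. For a term such as $\gamma_2(\Delta)\,d_{\mathrm{F}}(\Delta')$, weighted AM--GM produces $\tfrac{\lambda}{2}\gamma_2(\Delta)^2+\tfrac{1}{2\lambda}d_{\mathrm{F}}(\Delta')^2$; the first half is a summand of $\textup{\textsf{rs}}_p(\Delta)$, but $d_{\mathrm{F}}^2$ appears nowhere in $\textup{\textsf{rs}}_p$, and there is no general bound of $d_{\mathrm{F}}(\Delta')^2$ by $\textup{\textsf{rs}}_p(\Delta')$: a set can have large Frobenius radius while its $\gamma_2$ functional and spectral radius are both small. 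No choice of $\lambda$ repairs this, since the ratio $d_{\mathrm{F}}(\Delta')^2/\textup{\textsf{rs}}_p(\Delta')$ is unbounded. The same problem afflicts $\sqrt{p}\,d_{2\to2}(\Delta)\,d_{\mathrm{F}}(\Delta')$.

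The gap is not an artifact of your method: the inequality with $\max$ on the right-hand side is false in general. Take $p=1$, $\Delta=\{e_1e_1^*\}$ and $\Delta'=\{n^{-1/4}I_n\}$; these are singletons, so $\gamma_2=0$ for both, and one computes $\textup{\textsf{rs}}_1(\Delta)=2$ and $\textup{\textsf{rs}}_1(\Delta')=1+n^{-1/2}$, while $\textup{\textsf{rs}}_1(\Delta+\alpha\Delta')\geq d_{2\to2}(\Delta+\alpha\Delta')\,d_{\mathrm{F}}(\Delta+\alpha\Delta')\gtrsim n^{1/4}$, which exceeds any absolute multiple of $\max(\textup{\textsf{rs}}_1(\Delta),\textup{\textsf{rs}}_1(\Delta'))$ for large $n$. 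So the obstruction you identified cannot be closed by tuning constants, and it is present, silently, in the paper's own one-line conclusion as well. What the three subadditivity bounds actually yield --- and what Lemma~\ref{lemma:ip} and Theorem~\ref{thm:ip} genuinely require --- is that $\textup{\textsf{rs}}_p(\Delta+\alpha\Delta')$ is controlled by the expression (\ref{eq:defrs}) evaluated at the summed constituents $\gamma_2(\Delta)+\gamma_2(\Delta')$, $d_{\mathrm{F}}(\Delta)+d_{\mathrm{F}}(\Delta')$, and $d_{2\to2}(\Delta)+d_{2\to2}(\Delta')$; in the proofs of the main theorems these three quantities are bounded individually anyway, so this weaker (and correct) formulation suffices, whereas the $\max(\textup{\textsf{rs}}_p(\Delta),\textup{\textsf{rs}}_p(\Delta'))$ packaging should be abandoned rather than proved.
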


\begin{proof}
By the triangle inequality, we have
$d_{2\to2}(\Delta+\alpha\Delta') \leq d_{2\to2}(\Delta) + d_{2\to2}(\Delta')$
and $d_{\mathrm{F}}(\Delta+\alpha\Delta') \leq d_{\mathrm{F}}(\Delta) + d_{\mathrm{F}}(\Delta')$.
Moreover, Lemma~\ref{lemma:sa_gamma2} implies
\begin{align*}
{} &\gamma_2(\Delta+\alpha\Delta',\norm{\cdot}_{2\to2}) \\
{} & \leq (1+\sqrt{2}) \left\{\gamma_2(\Delta,\norm{\cdot}_{2\to2}) + \gamma_2(\alpha\Delta',\norm{\cdot}_{2\to2})\right\} \\
{} & = (1+\sqrt{2}) \left\{\gamma_2(\Delta,\norm{\cdot}_{2\to2}) + \gamma_2(\Delta',\norm{\cdot}_{2\to2})\right\} \pl.
\end{align*}
The assertion follows by applying these results to the definition of $\textup{\textsf{rs}}_p$.
\end{proof}

\begin{lemma}
\label{lemma:ip}
Let $\xi \in \cz^n$ be an $L$-subgaussian vector with $\mathbb{E} \xi \xi^* = I_n$.
Let $\Delta, \Delta' \subset \cz^{n \times n}$.
Then for every $p \geq 1$,
\begin{align*}
\left(\ez \sup_{M \in \Delta, M' \in \Delta'} | \langle M' \xi, M \xi \rangle - \ez \langle M' \xi, M \xi \rangle |^p\right)^{1/p}
\lesssim_L \max(\textup{\textsf{rs}}_p(\Delta), \textup{\textsf{rs}}_p(\Delta')) \pl .
\end{align*}
\end{lemma}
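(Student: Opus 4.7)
The plan is to deduce the bilinear chaos bound from the quadratic one of Krahmer et al.\ via the complex polarization identity. For any $a,b \in \cz^m$,
\[
4\langle a, b\rangle = \|a+b\|_2^2 - \|a-b\|_2^2 + i\|a-ib\|_2^2 - i\|a+ib\|_2^2 = \sum_{k=0}^3 i^{-k}\,\|a + i^k b\|_2^2.
\]
Specializing to $a = M'\xi$ and $b = M\xi$ and centering both sides, we get
\[
4\bigl(\langle M'\xi, M\xi\rangle - \ez\langle M'\xi, M\xi\rangle\bigr)
= \sum_{k=0}^3 i^{-k}\,\Bigl(\|(M' + i^k M)\xi\|_2^2 - \ez\|(M' + i^k M)\xi\|_2^2\Bigr).
\]
Taking absolute values, using the triangle inequality, and passing to the supremum over $(M,M') \in \Delta \times \Delta'$ yields the pointwise deterministic bound
\[
\sup_{M \in \Delta,\, M' \in \Delta'} \bigl|\langle M'\xi, M\xi\rangle - \ez\langle M'\xi, M\xi\rangle\bigr|
\;\leq\; \tfrac{1}{4}\sum_{k=0}^3 \sup_{N \in \Delta' + i^k \Delta} \bigl|\|N\xi\|_2^2 - \ez\|N\xi\|_2^2\bigr|,
\]
where I used the set identity $\{M' + i^k M : M \in \Delta, M' \in \Delta'\} = \Delta' + i^k \Delta$.

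Next, I would apply Minkowski's inequality in $L^p(\ez)$, which reduces the $p$th-moment estimate to four (constantly many) moments of centered quadratic chaos processes, one for each index $k \in \{0,1,2,3\}$. To each of these I apply the Krahmer--Mendelson--Rauhut bound~(\ref{KMR}) directly to the Minkowski-sum set $\Delta' + i^k\Delta \subset \cz^{n \times n}$, which yields a bound of the form
\[
\Bigl(\ez \sup_{N \in \Delta' + i^k\Delta} \bigl|\|N\xi\|_2^2 - \ez\|N\xi\|_2^2\bigr|^p\Bigr)^{1/p} \lesssim_L \textup{\textsf{rs}}_p(\Delta' + i^k \Delta).
\]
Then I invoke Lemma~\ref{lemma:rs} with the unit-modulus scalar $\alpha = i^k$ (uniformly in $k$) to conclude $\textup{\textsf{rs}}_p(\Delta' + i^k \Delta) \lesssim \max(\textup{\textsf{rs}}_p(\Delta),\textup{\textsf{rs}}_p(\Delta'))$, and summing over four values of $k$ absorbs the constant $1/4$ as well as the factor $4$ from the polarization identity, producing the claimed inequality.

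The proof is conceptually clean once polarization is used: there is no new chaining or decoupling step beyond what KMR already encapsulates. The main obstacle to address carefully is showing that $\textup{\textsf{rs}}_p$ is subadditive under Minkowski sums with unit-modulus scalars; this is precisely the content of Lemma~\ref{lemma:rs}, which itself rests on Lemma~\ref{lemma:sa_gamma2} (subadditivity of $\gamma_2$ under Minkowski sums, with the factor $1+\sqrt{2}$) and on the triangle inequality for the radii $d_{\mathrm{F}}(\cdot)$ and $d_{2\to 2}(\cdot)$. Since the set $\Delta' + i^k \Delta$ inherits admissible sequences by summing admissible sequences of $\Delta'$ and $i^k\Delta$ coordinate-wise, there is no measurability or tightness concern, and every estimate is uniform in $k$, so the final constant depends only on $L$ as required.
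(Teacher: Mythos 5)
Your argument is correct and follows essentially the same route as the paper's proof: the complex polarization identity reduces the centered bilinear form to four centered quadratic forms over the Minkowski-sum sets $\Delta' + i^k\Delta$, the triangle (Minkowski) inequality in $L_p$ handles the $p$th moments, and Lemma~\ref{lemma:rs} controls $\textup{\textsf{rs}}_p$ of each sum set by $\max(\textup{\textsf{rs}}_p(\Delta),\textup{\textsf{rs}}_p(\Delta'))$. No substantive differences to note.
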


\begin{proof}[Proof of Lemma~\ref{lemma:ip}]
By the polarization identity, we have
\begin{align*}
& \left| \langle M' \xi, M \xi \rangle - \ez \langle M' \xi, M \xi \rangle \right| \\
& = \frac{1}{4} \pl \Big| \sum_{\alpha \in \{\pm 1, \pm i\}} \alpha \left[(M \xi + \alpha M' \xi, M \xi + \alpha M' \xi) - \ez (M \xi + \alpha M' \xi, M \xi + \alpha M' \xi)\right] \Big| \\
& \le \frac{1}{4} \sum_{\alpha \in \{\pm 1, \pm i\}} \left| \p \norm{(M + \alpha M') \xi}_2^2 - \ez \norm{(M + \alpha M') \xi}_2^2 \right|
\end{align*}
Now the triangle inequality in $L_p$ (for $p\gl 1$) implies the assertion in combination with Lemma~\ref{lemma:rs}.
\end{proof}

\section{Key Entropy Estimates}
\label{sec:key_est}

In this section, we derive entropy estimates (lemmas~\ref{lemma:est_id} and \ref{lemma:est_fourier}),
which are key components in the proofs of the main results in Section~\ref{sec:proofs_main}.
These lemmas also extend the previous RIP results on certain random matrices
to the case where the linear operator is restricted to the set of compressible vectors instead of exactly sparse vectors.

The restricted isometry property of a subgaussian matrix and a partial Fourier matrix has been well studied in the compressed sensing literature.
The restrictive model in these studies was the standard sparsity model,
which consists of exactly $s$-sparse vectors in $\Gamma_s$.
We will derive Lemmas~\ref{lemma:est_id} and \ref{lemma:est_fourier}
in the course of extending the previously known $(\Gamma_s,\delta)$-RIP of random matrices to the $(\tGamma_s,\delta)$-RIP,
where the set of approximately $s$-sparse vectors $\tGamma_s$ is defined in (\ref{eq:deftGammas}).

\subsection{Subgaussian linear operator}

We start with a subgaussian matrix $A \in \mathbb{R}^{m \times n}$, whose entries are i.i.d. following $N(0,1/m)$.
Several derivations of the $(\Gamma_s,\delta)$-RIP of $A$ have been presented (cf. \cite{candes2005decoding,baraniuk2008simple,krahmer2014suprema}).
For example, the recent result by Krahmer et al. \cite{krahmer2014suprema} is summarized as follows:

\begin{theorem}[{\cite[Theorem~C.1]{krahmer2014suprema}}]
\label{thm:gauArip_exact}
A subgaussian matrix $A \in \mathbb{R}^{m \times n}$ satisfies $(\Gamma_s,\delta)$-RIP with probability at least $1-\epsilon$ if
\[
m \geq C \delta^{-2} \max\{s\log(en/s),\log(\epsilon^{-1})\}.
\]
\end{theorem}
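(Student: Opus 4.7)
My plan is to cast the $(\Gamma_s,\delta)$-RIP as a single supremum of subgaussian chaos so that Theorem~\ref{thm:kmr} applies directly. Let $\xi \in \rz^{mn}$ denote the vectorization of $\sqrt{m}\,A$; then $\xi$ is a standard Gaussian vector with $\ez\xi\xi^\transpose = I_{mn}$, and hence is $L$-subgaussian for an absolute constant $L$. For each $x \in \rz^n$ introduce the block-structured lift $M_x \in \rz^{m \times mn}$ defined by placing the row $x^\transpose/\sqrt{m}$ in the $i$th row of $M_x$ on the coordinates indexed by $\{(i-1)n+1,\ldots,in\}$ and zero elsewhere. Then $M_x \xi = Ax$ almost surely and $\ez \norm{M_x\xi}_2^2 = \norm{x}_2^2$. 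Taking $\Delta := \{M_x :\, x \in \Gamma_s,\, \norm{x}_2 \leq 1\}$, the $(\Gamma_s,\delta)$-RIP is precisely the statement
\[
\sup_{M \in \Delta} \big|\, \norm{M\xi}_2^2 - \ez \norm{M\xi}_2^2 \,\big| \leq \delta.
\]

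The next step is to evaluate the three invariants of $\Delta$ that feed Theorem~\ref{thm:kmr}. Because the rows of $M_x$ are supported on disjoint index blocks, one has $\norm{M_x}_{\mathrm{F}} = \norm{x}_2$ and $\norm{M_x}_{2\to 2} = \norm{x}_2/\sqrt{m}$, so $d_{\mathrm{F}}(\Delta) \leq 1$ and $d_{2\to 2}(\Delta) \leq m^{-1/2}$. The same computation shows that $x \mapsto M_x$ is an isometry from $(\rz^n, \norm{\cdot}_2/\sqrt{m})$ into $(\rz^{m\times mn}, \norm{\cdot}_{2\to 2})$, so $\gamma_2(\Delta, \norm{\cdot}_{2\to 2}) = m^{-1/2} \gamma_2(\Gamma_s \cap B_2^n, \norm{\cdot}_2)$, where $B_2^n$ is the Euclidean unit ball in $\rz^n$. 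I would bound the latter $\gamma_2$ quantity by Dudley's inequality (via (\ref{eq:intN2sumek})) together with the standard two-scale covering estimate $\log N(\Gamma_s \cap B_2^n, \epsilon B_2^n) \lesssim s\log(en/s) + s\log(1/\epsilon)$, obtained by union bounding over the $\binom{n}{s}$ possible supports and then using a volumetric $\epsilon$-net inside each $s$-dimensional coordinate subspace. This yields $\gamma_2(\Delta, \norm{\cdot}_{2\to 2}) \lesssim \sqrt{s\log(en/s)/m}$.

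The third step is to substitute these estimates into Theorem~\ref{thm:kmr} and tune the deviation parameter. The deterministic term obeys $c_1 K_1(\Delta) \lesssim s\log(en/s)/m + \sqrt{s\log(en/s)/m}$, which is at most $\delta/2$ as soon as $m \geq C_1 \delta^{-2} s\log(en/s)$. Setting $t = \delta/2$, and using that in this regime $K_2(\Delta) \lesssim m^{-1/2}$ and $K_3(\Delta) \lesssim m^{-1}$, the exponent in Theorem~\ref{thm:kmr} is of order $-\min(\delta^2 m,\,\delta m)$, which dominates $\log(\epsilon^{-1})$ provided $m \geq C_2 \delta^{-2} \log(\epsilon^{-1})$. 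Combining the two lower bounds on $m$ produces exactly the sample complexity stated in the theorem.

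The only genuinely delicate piece is the $\gamma_2$ estimate for $\Gamma_s \cap B_2^n$; it is by now classical, but the bookkeeping requires that the combinatorial factor $\log\binom{n}{s}$ be paid only at small scales, where the volumetric covering dominates, while the large-scale contribution is absorbed by $d_{\mathrm{F}}(\Delta)$. Once that entropy integral is organized correctly, everything else is a mechanical application of Theorem~\ref{thm:kmr}.
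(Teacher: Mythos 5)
Your argument is correct and is essentially the same proof the paper relies on for this statement: the paper cites \cite[Appendix~C]{krahmer2014suprema}, whose derivation is exactly your lift $x \mapsto M_x$ fed into Theorem~\ref{thm:kmr}, with the key ingredient being the entropy integral bound recorded in the paper as (\ref{eq:ubgamma2kmo}). Your computations of $d_{\mathrm{F}}(\Delta)$, $d_{2\to2}(\Delta)$, and $\gamma_2(\Delta,\norm{\cdot}_{2\to2})$, and the final tuning of $t=\delta/2$, all match that argument.
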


Earlier proofs \cite{candes2005decoding,baraniuk2008simple} consist of the following two steps:
i) For any $J \subset \{1,\ldots,n\}$ with $|J| = s$,
the corresponding submatrix $A_J$, with columns of $A$ indexed by $J$,
has its singular values concentrated within $(1-\delta,1+\delta)$ except with exponentially small probability;
ii) An upper bound on the probability for the violation ($\norm{A_J^* A_J - I_s} > \delta$) with the worst case choice of $J$,
obtained by a union bound, still remains small.
The first step was shown either by the large deviation result \cite{davidson2001local}
or by a standard volume argument together with the concentration of a subgaussian quadratic form.
It is not straightforward to extend these approaches to the case where the restriction set includes approximately $s$-sparse vectors.
Recently, Krahmer et al. \cite[Appendix~C]{krahmer2014suprema} proposed an alternative derivation of the $(\Gamma_s,\delta)$-RIP of a subgaussian matrix $A$.
They derived a Dudley-type upper bound on the $\gamma_2$ function of $B_2^n \cap \Gamma_s$ (the set of $s$-sparse vectors within the unit $\ell_2$ ball) given by
\begin{equation}
\label{eq:ubgamma2kmo}
\int_0^\infty \sqrt{ \log N(B_2^n \cap \Gamma_s,\epsilon B_2^n) } d\epsilon
\lesssim \sqrt{s \log(en/s)}.
\end{equation}
We extend their result in (\ref{eq:ubgamma2kmo}) to the approximately sparse case, which is stated in the following lemma.
\begin{lemma}
\label{lemma:est_id}
\[
\int_0^\infty \sqrt{ \log N(B_2^n \cap \tGamma_s,\epsilon B_2^n) } d\epsilon
\lesssim \sqrt{s} \log^{3/2} n.
\]
\end{lemma}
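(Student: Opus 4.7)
The plan is to replace the nonconvex set $B_2^n \cap \tGamma_s$ by a convex body of Maurey-type and then estimate its covering numbers in two separate regimes of $\epsilon$, using a volumetric bound for small $\epsilon$ and Maurey's empirical method (equivalently, Carl's estimate on the entropy numbers of $\mathrm{id}: \ell_1^n \to \ell_2^n$) for large $\epsilon$. The exponent $3/2$ on $\log n$ arises from the logarithmic length of the interval on which the Maurey regime contributes.

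First I would observe that if $u \in B_2^n \cap \tGamma_s$, then $\norm{u}_1 \leq \sqrt{s} \norm{u}_2 \leq \sqrt{s}$, hence
\[
B_2^n \cap \tGamma_s \subset K_s := B_2^n \cap \sqrt{s} B_1^n \subset \sqrt{s} B_1^n.
\]
Since $N(B,\epsilon D)$ is monotone in $B$, it suffices to upper bound $N(K_s, \epsilon B_2^n)$. Next I would apply Maurey's empirical argument (or equivalently the bound $e_k(\mathrm{id}: \ell_1^n \to \ell_2^n) \lesssim \sqrt{\log(2n/k)/k}$ of Carl) to deduce
\[
\log N(K_s, \epsilon B_2^n) \leq \log N(\sqrt{s} B_1^n, \epsilon B_2^n) = \log N(B_1^n, (\epsilon/\sqrt{s}) B_2^n) \lesssim \frac{s \log n}{\epsilon^2}
\]
in the regime where $\epsilon/\sqrt{s}$ is not too small. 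For small $\epsilon$ this Maurey bound is weaker than the trivial volumetric one; viewing $\cz^n \cong \rz^{2n}$ and using that $K_s \subset B_2^n$, a standard volume comparison gives
\[
\log N(K_s, \epsilon B_2^n) \leq 2n \log(1 + 2/\epsilon).
\]

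Finally I would split the integral at the crossover $\epsilon_* = \sqrt{s/n}$, noting that $K_s \subset B_2^n$ so the integrand vanishes for $\epsilon \geq 1$. The Maurey piece gives
\[
\int_{\epsilon_*}^1 \sqrt{\frac{s \log n}{\epsilon^2}} \, d\epsilon = \sqrt{s \log n} \cdot \log(1/\epsilon_*) \lesssim \sqrt{s \log n} \cdot \log n \lesssim \sqrt{s} \, \log^{3/2} n,
\]
while a change of variables $\epsilon = \epsilon_* e^{-t}$ reduces the volumetric piece to a Gaussian-type integral which one bounds by
\[
\int_0^{\epsilon_*} \sqrt{2n \log(1 + 2/\epsilon)} \, d\epsilon \lesssim \epsilon_* \sqrt{n \log(1/\epsilon_*)} \lesssim \sqrt{s \log n}.
\]
Summing the two contributions yields the claimed estimate.

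The main obstacle, I expect, is simply pinning down Carl's bound $\log N(B_1^n, \eta B_2^n) \lesssim \eta^{-2} \log n$ with the correct range of validity in $\eta$ and ensuring that the threshold $\epsilon_* = \sqrt{s/n}$ indeed produces the advertised $\sqrt{s}\log^{3/2}n$ scaling rather than a worse power of $\log n$. Everything else is a routine splitting-and-estimating exercise once the containment $B_2^n \cap \tGamma_s \subset K_s$ is in place, which converts the nonconvex approximate-sparsity set into a standard convex body whose entropy is classical.
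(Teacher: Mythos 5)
Your proposal is correct and follows essentially the same route as the paper: both replace $B_2^n \cap \tGamma_s$ by $\sqrt{s}\,B_1^n$, apply Maurey's empirical method (Carl's entropy estimate for $\id : \ell_1^n \to \ell_2^n$) on the coarse scales, and fall back on the standard volumetric bound on the fine scales. The only difference is presentational -- you split the Dudley integral at the threshold $\epsilon_* = \sqrt{s/n}$, while the paper first converts the integral to a series of dyadic entropy numbers via (\ref{eq:intN2sumek}) and splits the sum at $k = n^2$; these are the same computation in dual coordinates and yield the identical $\sqrt{s}\log^{3/2} n$ bound.
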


\begin{rem}
Lemma~\ref{lemma:est_id} provides an upper bound of the $\gamma_2$ function of a larger set $B_2^n \cap \tGamma_s$,
consisting of approximately $s$-sparse vectors, instead of the set $B_2^n \cap \Gamma_s$ of exactly $s$-sparse unit vectors.
On the other hand, unlike the upper bound in (\ref{eq:ubgamma2kmo}),
the bound in Lemma~\ref{lemma:est_id} is suboptimal, but only by a logarithmic factor.
\end{rem}

\begin{proof}[Proof of Lemma~\ref{lemma:est_id}]
Since $\tGamma_s \cap B_2^n \subset \sqrt{s} B_1^n$, we have
\begin{equation}
\label{eq:sumek2}
\begin{aligned}
{} & \int_0^\infty \sqrt{ \log N(B_2^n \cap \tGamma_s,\epsilon B_2^n) } d\epsilon \\
{} & \leq \int_0^\infty \sqrt{ \log N(\sqrt{s} B_1^n, \epsilon B_2^n) } d\epsilon \\
{} & = \sqrt{s} \int_0^\infty \sqrt{ \log N(B_1^n, \epsilon B_2^n) } d\epsilon \\
{} & \leq \sqrt{s} \sum_{k=1}^{\infty} \frac{e_k( B_1^n, B_2^n)}{\sqrt{k}},
\end{aligned}
\end{equation}
where the second step holds by the change of variables, and the third step follows from (\ref{eq:intN2sumek}).

Note that $\ell_p^n$ is of type-$p$ if $1 \leq p \leq 2$ and of type-2 if $p > 2$.
Furthermore, $I_n: \ell_1^n \to \ell_p^n$ is a contraction,
Therefore, Maurey's empirical method (cf. \cite[Proposition~2]{carl1985inequalities}, \cite{schutt1984entropy}) implies
\[
e_k(B_1^n, B_p^n) \lesssim \sqrt{p} f(k,n,\min(2,p)),
\]
where $f(k,n,p)$ is defined by
\[
f(k,n,p) :=
2^{-\max(k/n,1)} \min\left\{1 , \max\left[ \frac{\log(n/k+1)}{k} , \frac{1}{n} \right]^{1-1/p} \right\}.
\]

Let $a > 0$ denote the unique solution to $\log(a+1) = 1/a$. Then, $a > 1$.
The following cases for $n/k$ cover all possible scenarios.
\begin{description}
  \item[Case 1:] If $n/k > a$, then
  \[
  f(k,n,2) \leq 2^{-1} \sqrt{\frac{\log(n/k+1)}{k}}.
  \]

  \item[Case 2:] If $1 < n/k \leq a$, then
  \[
  f(k,n,2) = \frac{1}{2 \sqrt{n}} < \frac{1}{2 \sqrt{k}}.
  \]

  \item[Case 3:] If $n/k \leq 1$, then since $2^{-k/n} \leq \sqrt{n/k}$ for $k \geq n$, we have
  \[
  f(k,n,2) = 2^{-k/n} \frac{1}{\sqrt{n}} \leq \frac{1}{\sqrt{k}}.
  \]
\end{description}
Therefore,
\[
f(k,n,2) \lesssim \sqrt{\frac{\log (1+n/k)}{k}} \lesssim \sqrt{\frac{\log n}{k}},
\]
which implies
\begin{equation}
\label{eq:sumek2a}
\sum_{k=1}^{n^2-1} \frac{e_k(B_1^n, B_2^n)}{\sqrt{k}}
\lesssim \sum_{k=1}^{n^2-1} \frac{ \sqrt{\log n} }{ k } \leq \log^{3/2} n.
\end{equation}

For $k \geq n^2$, we use the standard volume argument to get
\[
e_k(B_2^n, B_2^n) \leq n/k.
\]
Indeed, by the standard volume argument (\cite[Lemma~1.7]{pisier1206probabilistic}), we have
\[
N(B_2^n,\epsilon B_2^n) \leq (1+2/\epsilon)^n \leq (3/\epsilon)^n,
\]
which implies
\[
e_k(B_2^n, B_2^n) \leq 3 \cdot 2^{-(k-1)/n} \leq 2^{-k/(2n)} \leq n/k.
\]

Therefore,
\begin{equation}
\label{eq:sumek2b}
\sum_{k=n^2}^{\infty} \frac{e_k(B_1^n, B_2^n)}{\sqrt{k}}
\leq \sum_{k=n^2}^{\infty} \frac{e_k(B_2^n, B_2^n)}{\sqrt{k}}
\leq \sum_{k=n^2}^{\infty} \frac{ n }{ k^{3/2} } \leq 2,
\end{equation}
where the first step holds since $B_1^n \subset B_2^n$.

Applying (\ref{eq:sumek2a}) and (\ref{eq:sumek2b}) to (\ref{eq:sumek2}) completes the proof.
\end{proof}

By replacing (\ref{eq:ubgamma2kmo}) in the proof of \cite[Theorem~C.1]{krahmer2014suprema} by Lemma~\ref{lemma:est_id},
we obtain the following theorem that gives the $(\tGamma_s,\delta)$-RIP of a subgaussian matrix.

\begin{theorem}
\label{thm:gauArip}
A subgaussian matrix $A \in \mathbb{R}^{m \times n}$ satisfies $(\tGamma_s,\delta)$-RIP with probability at least $1-\epsilon$ if
\[
m \geq C \delta^{-2} \max\{s\log^3 n,\log(\epsilon^{-1})\}.
\]
\end{theorem}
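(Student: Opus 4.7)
The plan is to follow the strategy that Krahmer, Mendelson, and Rauhut used for \cite[Theorem~C.1]{krahmer2014suprema}: reduce the RIP condition to the concentration of a subgaussian quadratic chaos process, apply Theorem~\ref{thm:kmr}, and feed in Lemma~\ref{lemma:est_id} in place of the exact-sparsity entropy estimate (\ref{eq:ubgamma2kmo}). Concretely, I would let $\xi \in \mathbb{R}^{mn}$ collect the properly scaled entries of $A$, so that $\xi$ is a standard subgaussian vector with $\mathbb{E}\xi\xi^\transpose = I_{mn}$; for each $w \in \mathbb{R}^n$ introduce the block matrix $M_w := \frac{1}{\sqrt{m}}(I_m \otimes w^\transpose) \in \mathbb{R}^{m \times mn}$, which satisfies $M_w \xi = Aw$ and $\mathbb{E}\norm{M_w \xi}_2^2 = \norm{w}_2^2$. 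Setting $\Delta := \{ M_w : w \in \tGamma_s \cap B_2^n \}$, the cone property of $\tGamma_s$ reduces the $(\tGamma_s,\delta)$-RIP to a deviation bound for the chaos process $\sup_{M \in \Delta} \bigl|\norm{M\xi}_2^2 - \mathbb{E}\norm{M\xi}_2^2\bigr|$.

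Next I would compute the three geometric parameters of $\Delta$ that enter Theorem~\ref{thm:kmr}. The block structure gives $\norm{M_w}_{\mathrm{F}} = \norm{w}_2$ and $\norm{M_w}_{2\to 2} = \norm{w}_2/\sqrt{m}$, hence $d_{\mathrm{F}}(\Delta) = 1$ and $d_{2\to 2}(\Delta) = 1/\sqrt{m}$. The same identity applied to $w - w'$ yields $\norm{M_w - M_{w'}}_{2\to 2} = \norm{w-w'}_2/\sqrt{m}$, so $\gamma_2(\Delta,\norm{\cdot}_{2\to 2}) = \frac{1}{\sqrt{m}}\,\gamma_2(\tGamma_s \cap B_2^n, \norm{\cdot}_2)$. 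Dudley's inequality combined with Lemma~\ref{lemma:est_id} then gives $\gamma_2(\Delta,\norm{\cdot}_{2\to 2}) \lesssim \sqrt{s/m}\,\log^{3/2} n$.

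Finally, I would plug these three parameters into Theorem~\ref{thm:kmr}. In the regime $m \geq C\delta^{-2} s \log^3 n$ with $C$ sufficiently large and $\delta \in (0,1)$, a direct calculation yields $K_1(\Delta) \lesssim \delta$, $K_2(\Delta) \lesssim 1/\sqrt{m}$, and $K_3(\Delta) = 1/m$. Taking $t = \delta/2$, the tail reduces to $2\exp(-c\delta^2 m)$, which falls below $\epsilon$ as soon as $m \gtrsim \delta^{-2} \log(\epsilon^{-1})$. Combining the two requirements yields the stated sample complexity. The only non-routine step in this argument is the entropy estimate for the enlarged set $\tGamma_s \cap B_2^n$, which has already been handled in Lemma~\ref{lemma:est_id}; the price paid relative to the exact-sparsity proof of \cite[Theorem~C.1]{krahmer2014suprema} is a single extra factor of $\log n$, surfacing as $\log^3 n$ rather than $\log(en/s)$ in the final bound.
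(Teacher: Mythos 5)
Your proposal is correct and follows exactly the route the paper takes: the paper's proof of this theorem is precisely ``substitute Lemma~\ref{lemma:est_id} for the entropy estimate (\ref{eq:ubgamma2kmo}) in the proof of \cite[Theorem~C.1]{krahmer2014suprema}'', and you have simply written out the details of that substitution (the block-matrix reduction to a chaos process, the computation of $d_{\mathrm{F}}$, $d_{2\to2}$, and $\gamma_2$, and the application of Theorem~\ref{thm:kmr}) correctly.
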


\subsection{Randomly sampled Fourier transform}

The $(\Gamma_s,\delta)$-RIP of a partial Fourier matrix at near optimal sample complexity was shown \cite{candes2006near,rudelson2008sparse}.
The result further generalized to randomly sampled frame operators \cite{rauhut2010compressive}.
Similarly to the previous section, we will extend a key entropy estimate in previous works \cite{rudelson2008sparse,rauhut2010compressive} from the set $\Gamma_s$ to its superset $\tGamma_s$.

Let $T: \cz^n \to \cz^n$ be a unitary transform so that $T^* T = T T^* = I_n$.
Let $\Omega = \{\omega_1,\omega_2,\ldots,\omega_m\} \subset \{1,\ldots,n\}$ denote the set of $m$ sampling indices.
Given $\Omega$, the sampling operator $S_\Omega: \cz^n \to \cz^m$ is defined so that
the $k$th element of $S_\Omega x \in \cz^m$ is the $\omega_k$th element of $x \in \cz^n$ for $k = 1,\ldots,m$.

\begin{theorem}[{\cite[Theorem~3.3]{rudelson2008sparse}\footnote{A slightly different assumption on $\Omega$ is used in \cite{rudelson2008sparse}. But the result and its proof remain intact with the change.}},{\cite[Theorem~4.4]{rauhut2010compressive}}]
\label{thm:fouArip}
Suppose that $(\omega_k)_{k=1}^m$ be i.i.d. following the uniform distribution on $\{1,\ldots,n\}$.
A random matrix $A \in \mathbb{R}^{m \times n}$ constructed by
\[
A = \sqrt{\frac{n}{m}} S_\Omega T,
\]
satisfies $(\Gamma_s,\delta)$-RIP with probability at least $1-n^{-\beta}$ if $m \geq C \delta^{-2} s \log^5 n$
for absolute constants $C, \beta > 0$.
\end{theorem}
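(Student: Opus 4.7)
The plan is to express the RIP deviation as the supremum of a chaos process indexed by $s$-sparse unit vectors, then invoke the Krahmer--Mendelson--Rauhut concentration machinery (a variant of Theorem~\ref{thm:kmr} adapted to subsampled bounded orthonormal systems), in the spirit of the proof of Theorem~\ref{thm:gauArip}. Let $D_s := B_2^n \cap \Gamma_s$, and let $a_j^* \in \cz^n$ denote the $j$-th row of $\sqrt{n}\,T$; since $T$ is the unitary DFT, these rows satisfy the uniform bound $\norm{a_j}_\infty \leq 1$. Writing
\[
\sup_{x \in D_s} \big| \norm{Ax}_2^2 - \norm{x}_2^2 \big|
= \sup_{x \in D_s} \Big| \tfrac{1}{m} \sum_{k=1}^m \big( |\langle a_{\omega_k}, x\rangle|^2 - \ez |\langle a_{\omega_k}, x\rangle|^2 \big) \Big|
\]
reduces the task to controlling an empirical process over $D_s$, and since the selectors $\omega_k$ are not themselves subgaussian I would first symmetrize via independent Rademachers to reduce to a Rademacher chaos conditioned on $(\omega_k)_{k=1}^m$.

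The chaos can be cast in the form required by Theorem~\ref{thm:kmr} by taking a family of matrices $\Delta = \{M_x : x \in D_s\}$ whose entries encode the sampled rows $a_{\omega_k}$ together with the coefficients of $x$. The diameters $d_{\mathrm{F}}(\Delta)$ and $d_{2\to 2}(\Delta)$ admit elementary deterministic bounds from $\norm{a_j}_\infty \leq 1$ and $\norm{x}_2 \leq 1$. The decisive step is estimating $\gamma_2(\Delta, \norm{\cdot}_{2\to 2})$, which by Dudley's inequality reduces to an entropy integral for $D_s$ under the seminorm induced by the sampled rows. Following Rudelson--Vershynin and Rauhut, I would split this integral into two regimes: at small scales, a volume bound on each coordinate subspace combined with a union bound over supports of size $s$ gives $\log N \lesssim s \log(en/s)$; at larger scales, Maurey's empirical method together with the pointwise bound $\norm{a_j}_\infty \leq 1$ gives an $\epsilon^{-2} \log^2 n$-type estimate. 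Combining the two contributions via (\ref{eq:intN2sumek}) yields $\gamma_2(\Delta,\norm{\cdot}_{2\to 2}) \lesssim \sqrt{s}\,\log^2 n$.

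Inserting these bounds into Theorem~\ref{thm:kmr} with deviation $t \asymp \delta$ and target tail probability $n^{-\beta}$ forces $m \gtrsim \delta^{-2}\, s\, \log^5 n$. The principal obstacle is the $\gamma_2$ bookkeeping: the exponent $5$ arises from squaring the $\log^2 n$ contribution in $\gamma_2$ and picking up one further $\log n$ needed to convert an expectation-level bound into a high-probability deviation bound at level $n^{-\beta}$. Each of these logarithms is essentially tight under the known covering-number estimates for $s$-sparse unit vectors interacting with $\ell_\infty$-bounded frames, so balancing the two regimes of the entropy integral correctly (and decoupling the Rademacher and selector randomness without losing extra log factors) is the crux of the argument.
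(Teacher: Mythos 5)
The paper does not prove Theorem~\ref{thm:fouArip} at all: it is imported by citation from Rudelson--Vershynin and Rauhut, with only a footnote observing that the slightly different sampling model ($\omega_k$ i.i.d.\ uniform) leaves the cited proof intact. So the relevant comparison is to those proofs, and against them your sketch has two genuine gaps.

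First, Theorem~\ref{thm:kmr} is not the right tool after symmetrization, and the ``variant adapted to bounded orthonormal systems'' you invoke is doing all the work. Once you introduce Rademachers, the quantity $\sup_{x}\bigl|\frac{1}{m}\sum_{k}\epsilon_k|\langle a_{\omega_k},x\rangle|^2\bigr|$ is \emph{linear} in $\epsilon$ --- a first-order subgaussian process --- whereas Theorem~\ref{thm:kmr} controls the diagonal-free \emph{quadratic} chaos $\norm{M\xi}_2^2-\ez\norm{M\xi}_2^2$. There is no family $\Delta=\{M_x\}$ turning a degree-one Rademacher sum into such a chaos, so the step ``cast the chaos in the form required by Theorem~\ref{thm:kmr}'' fails; what is actually needed is Dudley's inequality for the symmetrized process with respect to the seminorm $\norm{z}_X=\max_k|\langle a_{\omega_k},z\rangle|$, which is precisely the Rudelson--Vershynin route and is where the entropy estimate (\ref{eq:ubNrv}) enters.

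Second, your claim that the analogue of $d_{\mathrm{F}}(\Delta)$ ``admits an elementary deterministic bound'' is where the argument would actually break. After Dudley, the entropy integral is multiplied by $R=\sup_{x\in D_s}\bigl(\sum_k|\langle a_{\omega_k},x\rangle|^2\bigr)^{1/2}=\sqrt{m}\sup_x\norm{Ax}_2$, and deterministically one only gets $\sup_x\norm{Ax}_2^2\leq s$ from $\norm{a_j}_\infty\leq 1$, which loses a factor of $s$. The cited proofs close this via the self-bounding (bootstrap) step $E\lesssim\sqrt{(\text{entropy})^2/m}\cdot\sqrt{1+E}$, followed by a separate moment or martingale argument to upgrade the expectation bound to the $1-n^{-\beta}$ tail; both of these are absent from your sketch and neither is a routine ``one further $\log n$.'' Your two-regime covering estimate (volumetric at small scales, Maurey at large scales) and the resulting $\sqrt{s}\log^2 n$ entropy integral are the correct skeleton, but as written the concentration step does not go through.
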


One of the key steps in the proof of Theorem~\ref{thm:fouArip} involves the entropy estimate in the following inequality (a paraphrased version of \cite[Eq.~(13)]{rudelson2008sparse}):
Conditioned on $\Omega$, we have
\begin{equation}
\label{eq:ubNrv}
\int_0^\infty \sqrt{ \log N(S_\Omega T(B_2^n \cap \Gamma_s),\epsilon B_{\infty}^m) } d\epsilon
\lesssim \norm{T}_{1 \to \infty} \sqrt{s} \log s \log^{1/2}m \log^{1/2}n.
\end{equation}

We extend this result to the analogous entropy estimate for $\tGamma_s$ in the following Lemma.

\begin{lemma}
\label{lemma:est_fourier}
Let $T: \cz^n \to \cz^m$ where $m \leq n$. Then,
\[
\int_0^\infty \sqrt{ \log N(S_\Omega T(B_2^n \cap \tGamma_s),\epsilon B_{\infty}^m) } d\epsilon
\lesssim \norm{T}_{1 \to \infty} \sqrt{s} \log^{1/2}m \log^{3/2}n.
\]
\end{lemma}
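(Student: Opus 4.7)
The plan is to mirror the structure of the proof of Lemma~\ref{lemma:est_id}, replacing the Hilbert-space entropy estimate used there with an $\ell_\infty^m$-valued analogue obtained via Maurey's empirical method. The key reduction is the elementary inclusion
\[
B_2^n \cap \tGamma_s \subset \sqrt{s}\, B_1^n,
\]
which is immediate from the defining inequality $\norm{u}_1 \leq \sqrt{s}\norm{u}_2$ of $\tGamma_s$. Since $N(\sqrt{s}\,V,\epsilon W) = N(V,(\epsilon/\sqrt{s}) W)$, a change of variable pulls the factor $\sqrt{s}$ outside the entropy integral:
\[
\int_0^\infty \sqrt{\log N(S_\Omega T(B_2^n \cap \tGamma_s),\epsilon B_\infty^m)}\, d\epsilon
\leq \sqrt{s}\int_0^\infty \sqrt{\log N(S_\Omega T B_1^n,\epsilon B_\infty^m)}\, d\epsilon.
\]
The remaining task is thus to control the entropy integral of the absolutely convex hull $S_\Omega T B_1^n$, which is spanned by the $n$ columns of $S_\Omega T$, each of $\ell_\infty^m$-norm at most $\norm{T}_{1 \to \infty}$.

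Using (\ref{eq:intN2sumek}), I would convert this integral into a dyadic entropy sum $\sum_{k \geq 1} e_k(S_\Omega T B_1^n,B_\infty^m)/\sqrt{k}$ and invoke Carl's form of Maurey's empirical lemma. Since $\ell_\infty^m$ has Rademacher type-$2$ constant $\lesssim \sqrt{\log m}$ (a consequence of Khintchine combined with a union bound over the $m$ coordinates), this produces
\[
e_k(S_\Omega T B_1^n, B_\infty^m) \lesssim \norm{T}_{1\to\infty}\sqrt{\log m}\sqrt{\frac{\log(n/k+1)}{k}}, \qquad k \geq 1.
\]
Then I would split the sum at $k = n$: for $k \leq n$ the logarithmic factor is at most $\sqrt{\log(n+1)}$, so $\sum_{k=1}^{n} k^{-1}\sqrt{\log n} \lesssim \log^{3/2} n$; for $k > n$ the sharper inequality $\log(1+n/k) \leq n/k$ gives a tail $\sum_{k>n} \sqrt{n}\,k^{-3/2} = O(1)$. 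Reinstating $\sqrt{s}$ from the initial reduction produces exactly the target bound $\norm{T}_{1\to\infty}\sqrt{s}\log^{1/2} m \log^{3/2} n$.

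The main obstacle is locating and deploying the correct quantitative Maurey bound for the non-Hilbertian target space $\ell_\infty^m$: it is the type-$2$ constant $\sqrt{\log m}$ (rather than $\sqrt{\log n}$) that yields the asymmetric prefactor $\log^{1/2} m\cdot \log^{3/2} n$, sharper than the naive $\log^2 n$ one obtains by working in a Hilbert norm. A secondary technical point is the tail $k > n$ of the dyadic sum, where the Maurey bound degrades; here the sharper inequality $\log(1+n/k) \leq n/k$ is precisely what keeps that tail $O(1)$ and prevents an additional logarithmic loss in the final estimate. Apart from these two points, the argument proceeds by direct adaptation of the reasoning used in Lemma~\ref{lemma:est_id}.
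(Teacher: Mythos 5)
Your proposal is correct and follows essentially the same route as the paper: reduce to $\sqrt{s}\,B_1^n$ via $B_2^n\cap\tGamma_s\subset\sqrt{s}\,B_1^n$, pass to the dyadic entropy sum via (\ref{eq:intN2sumek}), and apply Maurey's empirical method in $\ell_\infty^m$ (whose type-$2$ constant supplies the $\log^{1/2}m$ factor, exactly as in the paper's bound $h(k,n,m)$ from Carl's Proposition~3). The only cosmetic difference is the tail of the sum: you cut at $k=n$ and use $\log(1+n/k)\leq n/k$, whereas the paper cuts at $k=n^2$ and invokes a volume argument; both yield an $O(\norm{T}_{1\to\infty})$ (resp.\ $O(\norm{T}_{1\to\infty}\log^{1/2}m)$) tail, which is absorbed into the stated bound.
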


While applying to a larger set $\tGamma_s$,
the upper bound in Lemma~\ref{lemma:est_fourier} is larger than that of (\ref{eq:ubNrv}) only by a logarithmic factor of $\log n / \log s$.

Replacing (\ref{eq:ubNrv}) in the proof of Theorem~\ref{thm:fouArip} \cite{rudelson2008sparse} by Lemma~\ref{lemma:est_fourier}
extends the RIP result in Theorem~\ref{thm:fouArip} to the compressible case as follows:

\begin{theorem}
Suppose that $(\omega_k)_{k=1}^m$ be i.i.d. following the uniform distribution on $\{1,\ldots,n\}$.
A random matrix $A \in \mathbb{R}^{m \times n}$ constructed by
\[
A = \sqrt{\frac{n}{m}} S_\Omega T,
\]
satisfies $(\tGamma_s,\delta)$-RIP with probability at least $1-n^{-\beta}$ if $m \geq C \delta^{-2} s \log^5 n$
for absolute constants $C, \beta > 0$.
\end{theorem}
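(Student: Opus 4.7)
The plan is to follow the argument for Theorem~\ref{thm:fouArip} from \cite{rudelson2008sparse,rauhut2010compressive} essentially verbatim, substituting Lemma~\ref{lemma:est_fourier} in place of the entropy estimate (\ref{eq:ubNrv}) that those proofs invoke at one key point. Concretely, I would set
\[
\delta_\star := \sup_{x \in B_2^n \cap \tGamma_s} \left| \,\norm{Ax}_2^2 - \norm{x}_2^2\, \right|,
\]
so that the conclusion is equivalent to $\delta_\star \leq \delta$, and note that since $T$ is unitary and $\Omega$ is drawn i.i.d. uniformly, $\ez \norm{Ax}_2^2 = \norm{x}_2^2$ for every $x \in \cz^n$; hence $\delta_\star$ is the deviation of an empirical process indexed by $B_2^n \cap \tGamma_s$.

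The heart of the argument, following Rudelson, is to symmetrize and then invoke Rudelson's lemma on random sums of rank-one matrices together with Dudley's inequality to bound $\ez\,\delta_\star$ by a product of $\sqrt{\delta_\star}$ and a $\gamma_2$-type integral
\[
\int_0^\infty \sqrt{ \log N(S_\Omega T(B_2^n \cap \tGamma_s),\epsilon B_\infty^m) }\, d\epsilon,
\]
up to absolute constants and a factor $\sqrt{\log m / m}$. At this point I would plug in Lemma~\ref{lemma:est_fourier} and solve the resulting self-bounded inequality of the form $a \leq b\sqrt{a} + c$ for $a = \ez\,\delta_\star$, which yields $\ez\,\delta_\star \leq \delta/2$ once $m \geq C\delta^{-2} s \log^5 n$. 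The final lift from expectation to a tail probability of $1 - n^{-\beta}$ would then use Talagrand's inequality for bounded empirical processes, as in \cite[Theorem~6.25]{rauhut2010compressive}; its input parameters (the $\ell_\infty^m$-diameter of $S_\Omega T(B_2^n \cap \tGamma_s)$ and the weak variance) are controlled uniformly by $\norm{T}_{1\to\infty}\sqrt{s}$ through the containment $\tGamma_s \cap B_2^n \subset \sqrt{s}\,B_1^n$.

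The main obstacle I foresee is bookkeeping rather than a new idea: I must verify that the only sparsity-sensitive step in the proofs of \cite{rudelson2008sparse,rauhut2010compressive} is the covering-number integral, and that every other step depends on the indexing set only through (i) $B_2^n \cap \tGamma_s \subset B_2^n$ and (ii) the uniform $\ell_\infty^m$-diameter bound $\norm{S_\Omega T x}_\infty \leq \norm{T}_{1\to\infty}\sqrt{s}$, both of which pass from $\Gamma_s$ to $\tGamma_s$ without change. Since those proofs are organized entirely via covering numbers and uniform boundedness, with no union bound over sparse supports, this verification should be routine, and the additional $\log n/\log s$ factor in Lemma~\ref{lemma:est_fourier} relative to (\ref{eq:ubNrv}) is absorbed into the $\log^5 n$ sample complexity without altering the exponent.
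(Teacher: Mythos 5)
Your proposal is correct and follows exactly the route the paper takes: the paper's entire proof consists of the remark that one replaces the entropy estimate (\ref{eq:ubNrv}) in the Rudelson--Vershynin/Rauhut argument by Lemma~\ref{lemma:est_fourier}, which is precisely your plan, and your bookkeeping observations (that the remaining steps depend on the index set only through $B_2^n$-containment and the $\ell_\infty^m$-diameter bound coming from $\tGamma_s \cap B_2^n \subset \sqrt{s}\,B_1^n$) are the same implicit verifications the paper relies on. The extra detail you supply about symmetrization, the self-bounding inequality, and Talagrand's concentration is a faithful expansion of what the cited proofs do, not a deviation from them.
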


The proof of Lemma~\ref{lemma:est_fourier} is given below.

\begin{proof}[Proof of Lemma~\ref{lemma:est_fourier}]
Since $\tGamma_s \cap B_2^n \subset \sqrt{s} B_1^n$, we have
\begin{equation}
\label{eq:sumek1}
\begin{aligned}
{} & \int_0^\infty \sqrt{ \log N(S_\Omega T(B_2^n \cap \tGamma_s), \epsilon B_{\infty}^n) } d\epsilon \\
{} & \leq \int_0^\infty \sqrt{ \log N(\sqrt{s} S_\Omega T(B_1^n), \epsilon B_{\infty}^n) } d\epsilon \\
{} & \leq \sqrt{s} \int_0^\infty \sqrt{ \log N(S_\Omega T(B_1^n), \epsilon B_{\infty}^n) } d\epsilon \\
{} & \lesssim \sqrt{s} \sum_{k=1}^{\infty} \frac{e_k( S_\Omega T(B_1^n), B_{\infty}^n)}{\sqrt{k}},
\end{aligned}
\end{equation}
where the last inequality follows from (\ref{eq:intN2sumek}).

Maurey's empirical method \cite[Proposition~3]{carl1985inequalities} implies
\[
e_k(S_\Omega T(B_1^n), B_{\infty}^m) \lesssim \norm{S_\Omega T}_{1 \to \infty}
h(k,n,m),
\]
where $h(k,n,m)$ is defined as
\begin{align*}
h(k,n,m) {} & :=
2^{-\max(k/n,k/m,1)} \max\left[1,\log^{1/2}(m/k+1)\right] \\
{} & \quad \cdot \min\left\{1 , \max\left[ \frac{\log(n/k+1)}{k} , \frac{1}{n} \right]^{1/2} \right\}.
\end{align*}

Let $a > 0$ denote the unique solution to $\log(a+1) = 1/a$. Then, $a > 1$.
Then, it suffices to consider the following three cases $n/k$.

\begin{description}
  \item[Case 1:] If $n/k > a$, then
  \[
  h(k,n,m) \leq 2^{-1} \sqrt{\frac{\log(m/k+1) \log(n/k+1)}{k}}.
  \]

  \item[Case 2:] If $1 < n/k \leq a$, then
  \[
  h(k,n,m) = 2^{-1} \sqrt{\frac{\log(m/k+1)}{n}} < 2^{-1} \sqrt{\frac{\log(m/k+1)}{k}}.
  \]

  \item[Case 3:] If $n/k \leq 1$, then since $2^{-k/n} \leq \sqrt{n/k}$ for $k \geq n$, we have
  \[
  h(k,n,m) = 2^{-k/n} \sqrt{\frac{\log(m/k+1)}{n}} \leq \sqrt{\frac{\log(m/k+1)}{k}}.
  \]
\end{description}
Therefore,
\[
h(k,n,m) \lesssim \sqrt{\frac{\log(m/k+1) \log(n/k+1)}{k}} \lesssim \sqrt{\frac{\log m \log n}{k}},
\]
which, together with $\norm{S_\Omega T}_{1 \to \infty} \leq \norm{T}_{1 \to \infty}$, implies
\begin{equation}
\label{eq:sumek1a}
\sum_{k=1}^{n^2-1} \frac{e_k(T(B_1^n), B_{\infty}^n)}{\sqrt{k}}
\lesssim \sum_{k=1}^{n^2-1} \frac{\norm{T}_{1 \to \infty} \sqrt{\log m \log n} }{k}
\leq \norm{T}_{1 \to \infty} \log^{1/2}m \log^{3/2}n.
\end{equation}

For $k \geq n^2$, we compute an upper estimate of the dyadic entropy number using the standard volume argument.
First, we note
\[
e_k(S_\Omega T(B_1^n), B_{\infty}^n) \leq \norm{T}_{1\to\infty} e_k(B_{\infty}^n, B_{\infty}^n).
\]
By the standard volume argument \cite[Lemma~1.7]{pisier1206probabilistic}, we have
\[
N(B_{\infty}^n,\epsilon B_{\infty}^n) \leq (1+2/\epsilon)^n \leq (3/\epsilon)^n,
\]
which implies
\[
e_k(B_{\infty}^n, B_{\infty}^n) \leq 3 \cdot 2^{-(k-1)/n} \leq 2^{-k/(2n)} \leq n/k.
\]
Therefore,
\begin{equation}
\label{eq:sumek1b}
\begin{aligned}
{} & \sum_{k=n^2}^{\infty} \frac{e_k(S_\Omega T(B_1^n), B_{\infty}^n)}{\sqrt{k}} \\
{} & \leq \sum_{k=n^2}^{\infty} \frac{\norm{T}_{1 \to \infty} e_k(B_\infty^n, B_{\infty}^n)}{\sqrt{k}} \\
{} & \leq \sum_{k=n^2}^{\infty} \frac{n \norm{T}_{1 \to \infty}}{ k^{3/2} } \\
{} & \leq 2 \norm{T}_{1 \to \infty}.
\end{aligned}
\end{equation}

Applying (\ref{eq:sumek1a}) and (\ref{eq:sumek1b}) to (\ref{eq:sumek1}) completes the proof.
\end{proof}

\section{Proofs of the Main Results}
\label{sec:proofs_main}

Now, we are ready to prove the main results with Theorem~\ref{thm:ip} in Section~\ref{sec:chaos} and Lemmas~\ref{lemma:est_id} and \ref{lemma:est_fourier} in Section~\ref{sec:key_est}.

\subsection{Proof of Theorem~\ref{thm:bds_rap}}

\begin{proof}[Proof of Theorem~\ref{thm:bds_rap}]

We only prove the first part of Theorem~\ref{thm:bds_rap}.
The proof of the second part follows by symmetry.

Under the assumption of Theorem~\ref{thm:bds_rap}, by Theorem~\ref{thm:gauArip}
$\Psi$ satisfies
\begin{equation}
\label{eq:proof:thm:bds_rap:ripPsi}
\sup_{v \in B_2^n \cap \tGamma_{s_2}} |v^* (\Psi^*\Psi - I_n)v| \leq \delta/2
\end{equation}
except with probability $n^{-\beta_1}$ for an absolute constant $\beta_1 \in \mathbb{N}$.

Since $F$ is unitary, $F \Psi$ has the same distribution to that of $\Psi$.
Let $g_{i,j}$ denote the $(i,j)$th entry of $\sqrt{n} F \Phi$.
Then, $|g_{i,j}|^2$'s are i.i.d. following a Chi-squared distribution with degree of freedom 1.
Since
\[
n \norm{F \Psi}_{1 \to \infty}^2 = \max_{i,j} |g_{i,j}|^2,
\]
by computing the tail distribution of the order statistic, we get
\begin{equation}
\label{eq:proof:thm:bds_rap:maxbnd}
\norm{F \Psi}_{1 \to \infty} \leq c \sqrt{\log n}
\end{equation}
except with probability $n^{-\beta_2}$ for an absolute constant $\beta_2 \in \mathbb{N}$.

We proceed with conditioning on the events in (\ref{eq:proof:thm:bds_rap:ripPsi}) and (\ref{eq:proof:thm:bds_rap:maxbnd}).
In other words, in the remainder of the proof, we will treat $\Psi$ as a deterministic matrix
that satisfies (\ref{eq:proof:thm:bds_rap:ripPsi}) and (\ref{eq:proof:thm:bds_rap:maxbnd}).

\begin{rem}
When $\Psi = I_n$ instead of an i.i.d. Gaussian matrix, we have
\[
\sup_{v \in B_2^n \cap \tGamma_{s_2}} |v^* (\Psi^*\Psi - I_n)v| = 0 \leq \delta/2,
\]
which trivially implies (\ref{eq:proof:thm:bds_rap:ripPsi}).
Therefore, we also obtain Corollary~\ref{cor:bds_rap} once we finish the proof of Theorem~\ref{thm:bds_rap} as follows.
\end{rem}

Define $R_{u,v} \in \mathbb{C}^{m \times n^2}$  and $\xi \in \mathbb{C}^{n^2}$ by
\begin{align*}
R_{u,v} {} & := u^\transpose \otimes \sqrt{\frac{n}{m}} S_\Omega F^* D_{F \Psi v}
\end{align*}
and
\[
\xi := \sqrt{n} (I_n \otimes F) \vect(\Phi).
\]
Then, $R_{u,v} \xi$ satisfies
\[
R_{u,v} \xi = \frac{n}{\sqrt{m}} S_\Omega F^* (F \Psi v \odot F \Phi u)
= \sqrt{\frac{n}{m}} S_\Omega (\Psi v \conv \Phi u)
= \A(u v^\transpose).
\]
Therefore, we have
\[
\langle \hat{u} \hat{v}^\transpose, \A^*\A (u v^\transpose) \rangle = \langle R_{\hat{u},\hat{v}} \xi, R_{u,v} \xi \rangle.
\]

By Lemma~\ref{lemma:isotropyA}, we have
\[
\mathbb{E}_\Phi \langle R_{\hat{u},\hat{v}} \xi, R_{u,v} \xi \rangle
= \langle \hat{u} \hat{v}^\transpose, \mathbb{E}_\Phi \A^*\A (u v^\transpose) \rangle
= \langle \hat{u} \hat{v}^\transpose, u v^\transpose (\Psi^* \Psi)^\transpose \rangle.
\]

By the triangle inequality, we have
\begin{align*}
{} & \big| \langle \hat{u} \hat{v}^\transpose, (\A^*\A - \id) (u v^\transpose) \rangle \big| \\
{} & \leq \big| \langle \hat{u} \hat{v}^\transpose, (\A^*\A - \mathbb{E}_\Phi \A^*\A) (u v^\transpose) \rangle \big|
+ \big| \langle \hat{u} \hat{v}^\transpose, (\mathbb{E}_\Phi \A^*\A - \id) (u v^\transpose) \rangle \big| \\
{} & = | \langle R_{\hat{u},\hat{v}} \xi, R_{u,v} \xi \rangle - \mathbb{E}_\Phi \langle R_{\hat{u},\hat{v}} \xi, R_{u,v} \xi \rangle|
+ \underbrace{ | \hat{u}^* u v^\transpose (\Psi^* \Psi - I_n)^\transpose \overline{\hat{v}} | }_{(*)}.
\end{align*}

By (\ref{eq:proof:thm:bds_rap:ripPsi}), the bias term in the expectation in $(*)$ is upper-bounded by
\begin{align*}
| \hat{u}^* u v^\transpose (\Psi^* \Psi - I_n)^\transpose \overline{\hat{v}} |
\leq \delta/2 \norm{u}_2 \norm{\hat{u}}_2 \norm{v}_2 \norm{\hat{v}}_2
= \delta/2 \norm{\hat{u} \hat{v}^\transpose}_{\mathrm{F}} \norm{u v^\transpose}_{\mathrm{F}}.
\end{align*}

Therefore, it suffices to show
\begin{equation}
\sup_{M, M' \in \Delta} \left| \langle M' \xi, M \xi \rangle - \mathbb{E}_\Phi \langle M' \xi, M \xi \rangle \right| \leq \delta/2,
\label{eq:concenGbil}
\end{equation}
where $\Delta \in \cz^{n \times n}$ is defined by
\begin{equation}
\Delta := \{R_{u,v} : u \in B_2^n \cap \tGamma_{s_1}, \pl v \in B_2^n \cap \tGamma_{s_2} \cap \C_{\mu} \}.
\label{eq:defDelta}
\end{equation}

Since $I_n \otimes F$ is a unitary transform, $\xi \in \cz^{n^2}$ is a Gaussian vector satisfying $\mathbb{E} \xi \xi^* = I_{n^2}$.
The desired concentration of the subgaussian bilinear form in (\ref{eq:concenGbil}) is then derived using Theorem~\ref{thm:ip}.
To apply Theorem~\ref{thm:ip}, we derive upper bounds on
$d_{\mathrm{F}}(\Delta)$, $d_{2\to2}(\Delta)$, and $\gamma_2(\Delta,\norm{\cdot}_{2\to2})$ in the following.

Suppose $R_{u,v} \in \Delta$.
Then, the Frobenius norm of $R_{u,v}$ is written as
\begin{align*}
\norm{R_{u,v}}_{\mathrm{F}}
{} & = \sqrt{\frac{n}{m}} \norm{u}_2 \norm{S_\Omega F^* D_{F \Psi v}}_{\mathrm{F}} \\
{} & \leq \sqrt{\frac{n}{m}} \norm{S_\Omega F^* D_{F \Psi v}}_{\mathrm{F}}.
\end{align*}
In fact, it is upper-bounded by
\begin{align*}
{} & \frac{n}{m} \norm{S_\Omega F^* D_{F \Psi v}}_{\mathrm{F}}^2 \\
{} & = \sum_{k=1}^n \frac{n}{m} \norm{S_\Omega F^* D_{F \Psi v} e_k}_2^2 \\
{} & = \sum_{k=1}^n |e_k^* F \Psi v|^2 \frac{n}{m} \norm{S_\Omega F^* e_k}_2^2 \\
{} & = \sum_{k=1}^n |e_k^* F \Psi v|^2 = \norm{F \Psi v}_2^2 \\
{} & \leq (1+\delta/2) \norm{v}_2^2.
\end{align*}

Meanwhile, the spectral norm of $R_{u,v}$ is upper-bounded by
\begin{align}
\norm{R_{u,v}}_{2\to2}
{} & = \sqrt{\frac{n}{m}} \norm{u}_2 \norm{S_\Omega F^* D_{F \Psi v}}_{2\to2} \nonumber \\
{} & \leq \sqrt{\frac{n}{m}} \norm{u}_2 \norm{S_\Omega F^*}_{2\to2} \norm{F \Psi v}_\infty \label{eq:Ruvsn} \\
{} & \leq \sqrt{\frac{\mu}{m}}, \label{eq:Ruvubsn}
\end{align}
where the last step follows from $v \in C_{\mu}$.

Since $R_{u,v}$ was an arbitrary element of $\Delta$, we deduce
\[
d_{\mathrm{F}}(\Delta) \leq \sqrt{1+\delta/2}.
\]
and
\[
d_{S_\infty}(\Delta) \leq \sqrt{\mu/m}.
\]

Next, by Lemma~\ref{lemma:gamma2}, the last term $\gamma_2(\Delta,\norm{\cdot}_{2\to2})$ is bounded from above by
\[
\gamma_2(\Delta,\norm{\cdot}_{2\to2})
\lesssim \sqrt{\frac{\mu s_1 + s_2}{m}} \log^{5/2} n.
\]

Let $t = \delta/4$. Then, combining upper bounds on $K_1$, $K_2$, and $K_3$ in Theorem~\ref{thm:ip},
we note that there exists an absolute constant $C$ so that $n \geq C \delta^{-2} (\mu s_1 + s_2) \log^5 n$ implies
\[
c_1 K_1 + t \leq \delta/2
\]
and
\[
2 \exp
\left(
-c_2
\min
\left\{
\frac{t^2}{K_2^2}, \frac{t}{K_3}
\right\}
\right)
\leq n^{-\beta_2}
\]
for an absolute constant $\beta_2 \in \mathbb{N}$.
This concludes the proof.
\end{proof}

\begin{lemma}[Isotropy]
Let $\Phi, \Psi \in \cz^{n \times n}$ be independent random matrices whose entries are i.i.d. following $CN(0,1/n)$.
Let $\A$ be defined in (\ref{eq:defcalA}).
Then,
\label{lemma:isotropyA}
\begin{align*}
\mathbb{E}_\Phi \A^* \A(X) {} & = X (\Psi^* \Psi)^\transpose, \\
\mathbb{E}_\Psi \A^* \A(X) {} & = \Phi^* \Phi X.
\end{align*}
\end{lemma}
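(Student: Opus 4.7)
The plan is to prove $\mathbb{E}_\Phi \A^*\A(X) = X(\Psi^*\Psi)^\transpose$ by direct second-moment computation; the companion identity $\mathbb{E}_\Psi \A^*\A(X) = \Phi^*\Phi X$ follows by the same argument with the roles of $\Phi$ and $\Psi$ interchanged. Starting from $\A^*\A(X) = \sum_{\ell=1}^m \langle M_\ell, X\rangle M_\ell$, I would factor $M_\ell = \Phi^* A_\ell \overline{\Psi}$ with the deterministic matrix $A_\ell := \frac{n}{\sqrt m} F^* \diag(f_{\omega_\ell}) \overline{F}$, so that only the two $\Phi$-factors are random. Pulling $\mathbb{E}_\Phi$ inside the sum, the task is reduced to evaluating each $\mathbb{E}_\Phi[\langle M_\ell, X\rangle M_\ell]$ and then summing a deterministic quantity in $A_\ell$.

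For a fixed $\ell$, I would expand $\langle M_\ell, X\rangle = \trace(M_\ell^* X) = \trace(\Psi^\transpose A_\ell^* \Phi X)$ in coordinates and write the $(p,q)$ entry of $M_\ell$ as $\sum_{r,s} \overline{\Phi_{r,p}}\,(A_\ell)_{r,s}\,\overline{\Psi_{s,q}}$. The product $\langle M_\ell, X\rangle M_\ell$ is then a polynomial of total degree one in $\Phi$ and one in $\overline{\Phi}$. Using the circular Gaussian isotropy $\mathbb{E}_\Phi \Phi_{b,c}\overline{\Phi_{r,p}} = \delta_{b,r}\delta_{c,p}/n$ (and $\mathbb{E}_\Phi \Phi_{b,c}\Phi_{r,p}=0$, which kills any pairing of two $\Phi$'s or two $\overline{\Phi}$'s), a short index contraction collapses the four-sum into
\[
\mathbb{E}_\Phi[\langle M_\ell, X\rangle M_\ell] \;=\; \frac{1}{n}\, X\,\Psi^\transpose (A_\ell^* A_\ell)\,\overline{\Psi}.
\]

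What remains is to identify $\sum_\ell A_\ell^* A_\ell$. A direct expansion gives $A_\ell^* A_\ell = (n^2/m) F^\transpose \diag(|f_{\omega_\ell}|^2) \overline{F}$, and the key observation is the flatness of the DFT: every entry of any column of the unitary Fourier matrix has modulus $1/\sqrt n$, so $\diag(|f_{\omega_\ell}|^2) = I_n/n$. This collapses the expression to $A_\ell^* A_\ell = (n/m)\, F^\transpose \overline{F} = (n/m)\, I_n$, independent of $\ell$, and summing over the $m$ indices yields $\sum_\ell A_\ell^* A_\ell = n\, I_n$. Substituting this back gives
\[
\mathbb{E}_\Phi \A^*\A(X) \;=\; \frac{1}{n}\, X\, \Psi^\transpose (n\, I_n)\, \overline{\Psi} \;=\; X\, \Psi^\transpose \overline{\Psi} \;=\; X(\Psi^*\Psi)^\transpose,
\]
as required.

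There is no deep obstacle here; the whole computation is a Wick-type Gaussian second moment plus the DFT flatness identity $|f_{\omega_\ell}|^2 = \mathbf{1}/n$, which in particular makes the statement independent of which subset $\Omega$ was sampled. The only point requiring care is bookkeeping of transposes versus conjugates, because the definition of $M_\ell$ involves $\overline{\Psi}$ and $\Phi^*$ rather than $\Psi$ and $\Phi$; in particular, the natural object appearing on the right-hand side is $\Psi^\transpose\overline{\Psi} = (\Psi^*\Psi)^\transpose$, not $\Psi^*\Psi$ itself.
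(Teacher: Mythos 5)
Your proposal is correct and takes essentially the same route as the paper: a direct Gaussian second-moment (Wick) computation that hinges on exactly the two facts the paper uses, namely $\mathbb{E}_\Phi\,\Phi_{b,c}\overline{\Phi_{r,p}}=\delta_{b,r}\delta_{c,p}/n$ and the unimodularity $|f_{\omega_\ell}^* e_j|^2=1/n$ of the DFT entries. The only difference is bookkeeping --- you keep $M_\ell=\Phi^* A_\ell\overline{\Psi}$ intact and collapse the sum via $\sum_\ell A_\ell^* A_\ell=nI_n$, whereas the paper vectorizes and computes $\mathbb{E}_\Phi\bigl(|\vect{M_\ell}\rangle\langle\vect{M_\ell}|\bigr)=\frac{1}{m}\Psi^*\Psi\otimes I_n$ using a rank-one decomposition of $\diag(f_{\omega_\ell})$; your final transpose/conjugate accounting $\Psi^\transpose\overline{\Psi}=(\Psi^*\Psi)^\transpose$ is right.
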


\begin{proof}[Proof of Lemma~\ref{lemma:isotropyA}]
Note that
\begin{align*}
\langle M_\ell | X \rangle {}
& = \frac{n}{\sqrt{m}} \left\langle \Phi^* F^* \diag(f_{\omega_\ell}) \overline{F} \overline{\Psi}, X \right\rangle \\
{} & = \frac{n}{\sqrt{m}} \pl \Big\langle \sum_{k=1}^n e_k^* f_{\omega_\ell} \Phi^* F^* e_k e_k^* \overline{F} \overline{\Psi}, X \Big\rangle \\
{} & = \frac{n}{\sqrt{m}} \sum_{k=1}^n f_{\omega_\ell}^* e_k (e_k^* F \Psi \otimes e_k^* F \Phi) \vect(X).
\end{align*}

Therefore,
\begin{align*}
{} & \mathbb{E}_\Phi \left( |\vect{M_\ell} \rangle \langle \vect{M_\ell}| \right) \\
{} & = \frac{n^2}{m} \mathbb{E}_\Phi \left[ \sum_{j=1}^n e_j^* f_{\omega_\ell} (\Psi^* F^* e_j \otimes \Phi^* F^* e_j)
\sum_{k=1}^n f_{\omega_\ell}^* e_k (e_k^* F \Psi \otimes e_k^* F \Phi) \right] \\
{} & = \frac{n^2}{m} \mathbb{E}_\Phi \left[ \sum_{j=1}^n \sum_{k=1}^n e_j^* f_{\omega_\ell} f_{\omega_\ell}^* e_k
(\Psi^* F^* e_j e_k^* F \Psi \otimes \Phi^* F^* e_j e_k^* F \Phi) \right] \\
{} & = \frac{n^2}{m} \sum_{j=1}^n |f_{\omega_\ell}^* e_j|^2
\left(\Psi^* F^* e_j e_j^* F \Psi \otimes \frac{1}{n} I_n\right) \\
{} & = \frac{n}{m} \sum_{j=1}^n \left(\Psi^* F^* e_j e_j^* F \Psi \otimes \frac{1}{n} I_n\right)
= \frac{1}{m} \Psi^* \Psi \otimes I_n,
\end{align*}
where the third step follows since
\[
\mathbb{E}_\Phi \Phi^* F^* e_j e_k^* F \Phi =
\begin{cases}
\frac{1}{n} I_n & \text{if $j = k$} \\
0 & \text{otherwise}.
\end{cases}
\]

This implies
\[
\vect\left[ \mathbb{E}_\Phi \left( |M_\ell \rangle \langle M_\ell| X \rangle \right) \right]
= \mathbb{E}_\Phi \left( |\vect{M_\ell} \rangle \langle \vect{M_\ell}| \vect(X) \rangle \right)
= \frac{1}{m} (\Psi^* \Psi \otimes I_n) \vect(X).
\]
Therefore,
\[
\mathbb{E}_\Phi \left( |M_\ell \rangle \langle M_\ell| X \rangle \right)
= \frac{1}{m} X (\Psi^* \Psi)^\transpose.
\]

Finally, we get
\begin{align*}
\mathbb{E}_\Phi \A^* \A(X)
{} & = \sum_{\ell=1}^m |M_\ell \rangle \langle M_\ell| X\rangle = X (\Psi^*\Psi)^\transpose.
\end{align*}

Since
\begin{align*}
{} & \mathbb{E}_\Psi \left( |\vect{M_\ell} \rangle \langle \vect{M_\ell}| \right) \\
{} & = \frac{n^2}{m} \mathbb{E}_\Psi \left[ \sum_{j=1}^n \sum_{k=1}^n e_j^* f_{\omega_\ell} f_{\omega_\ell}^* e_k
(\Psi^* F^* e_j e_k^* F \Psi \otimes \Phi^* F^* e_j e_k^* F \Phi) \right] \\
{} & = \frac{n}{m} \sum_{j=1}^n \left(\frac{1}{n} I_n \otimes \Psi^* F^* e_j e_j^* F \Psi\right)
= \frac{1}{m} I_n \otimes \Phi^*\Phi,
\end{align*}
where the second identity is derived similarly.

\end{proof}

\begin{lemma}
\label{lemma:gamma2}
Let $\Delta$ be defined in (\ref{eq:defDelta}).
Let $\Phi \in \cz^{n \times n}$ be a random matrix whose entries are i.i.d. following $CN(0,1/n)$.
Suppose that $\Psi$ satisfies (\ref{eq:proof:thm:bds_rap:maxbnd}).
Then,
\[
\gamma_2(\Delta,\norm{\cdot}_{2\to2})
\lesssim \sqrt{\frac{\mu s_1 + s_2}{m}} \log^{5/2} n.
\]
\end{lemma}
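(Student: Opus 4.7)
The plan is to exploit the tensor-product structure $R_{u,v} = u^\transpose \otimes B_v$, where $B_v := \sqrt{n/m}\, S_\Omega F^* D_{F\Psi v}$, and reduce the bound on $\gamma_2$ to two independent entropy integrals, one in $u$ and one in $v$. First I would derive an additive incremental bound on the spectral distance. Writing
\[
R_{u,v} - R_{u',v'} = (u-u')^\transpose \otimes B_v + u'^\transpose \otimes (B_v - B_{v'})
\]
and using $\norm{a^\transpose \otimes M}_{2\to 2} = \norm{a}_2 \norm{M}_{2\to 2}$, together with $\norm{B_v}_{2\to 2} \leq \sqrt{\mu/m}$ (already obtained in the main proof from the spectral flatness of $v$) and $\norm{B_v - B_{v'}}_{2\to 2} \leq \sqrt{n/m}\,\norm{F\Psi(v - v')}_\infty$ (the same type of computation, with the flatness restriction dropped on the difference), I would obtain
\[
\norm{R_{u,v} - R_{u',v'}}_{2\to 2} \leq \sqrt{\mu/m}\,\norm{u-u'}_2 + \sqrt{n/m}\,\norm{F\Psi(v-v')}_\infty.
\]

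Next I would apply Dudley's inequality in conjunction with a product-of-covers bound. Parametrize $\Delta$ by pairs $(u,v) \in U \times V$ with $U := B_2^n \cap \tGamma_{s_1}$ and $V := B_2^n \cap \tGamma_{s_2} \cap \C_\mu$, and set $d_1 := \sqrt{\mu/m}\,\norm{\cdot}_2$, $d_2 := \sqrt{n/m}\,\norm{F\Psi\,\cdot}_\infty$. The increment bound implies that the product of an $(\epsilon/2)$-cover of $U$ in $d_1$ with an $(\epsilon/2)$-cover of $V$ in $d_2$ is an $\epsilon$-cover of $\Delta$ in $\norm{\cdot}_{2\to 2}$, so $\log N(\Delta, \epsilon B_{2\to 2}) \leq \log N(U, (\epsilon/2) B_{d_1}) + \log N(V, (\epsilon/2) B_{d_2})$. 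Combining $\sqrt{a+b}\leq\sqrt{a}+\sqrt{b}$ with Dudley's integral,
\[
\gamma_2(\Delta, \norm{\cdot}_{2\to 2}) \lesssim \int_0^\infty \sqrt{\log N(U, \epsilon B_{d_1})}\, d\epsilon + \int_0^\infty \sqrt{\log N(V, \epsilon B_{d_2})}\, d\epsilon.
\]

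It remains to evaluate the two integrals. A change of variable extracts $\sqrt{\mu/m}$ from the first, and Lemma~\ref{lemma:est_id} bounds the remainder by $\sqrt{s_1}\log^{3/2} n$, producing $\sqrt{\mu s_1/m}\log^{5/2} n$. For the second, I would enlarge $V \subseteq B_2^n \cap \tGamma_{s_2}$ and invoke Lemma~\ref{lemma:est_fourier} with $T = F\Psi$ and $S_\Omega = I_n$ (so the lemma's sample count equals $n$), which gives $\sqrt{n/m}\cdot\norm{F\Psi}_{1\to\infty}\sqrt{s_2}\log^2 n$. The Gaussian maximum computation preceding (\ref{eq:proof:thm:bds_rap:maxbnd}) delivers $\norm{F\Psi}_{1\to\infty} \lesssim \sqrt{(\log n)/n}$; the hidden $\sqrt{1/n}$ then cancels against the $\sqrt{n/m}$ prefactor to leave $\sqrt{s_2/m}\log^{5/2} n$. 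Summing the two contributions yields the claimed $\sqrt{(\mu s_1 + s_2)/m}\log^{5/2} n$. I expect the main obstacle to be the product-of-covers decomposition in the second paragraph — everything else is careful bookkeeping of normalizations, with the one delicate numerical point being the cancellation of the $\sqrt{n/m}$ prefactor against the $\sqrt{1/n}$ buried in the maximum-of-Gaussians tail bound on $\norm{F\Psi}_{1\to\infty}$.
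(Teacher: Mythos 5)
Your proposal matches the paper's own proof essentially step for step: the same bilinear increment bound $\norm{R_{u,v}-R_{u',v'}}_{2\to2}\leq \sqrt{\mu/m}\,\norm{u-u'}_2+\sqrt{n/m}\,\norm{F\Psi(v-v')}_\infty$, the same product-of-covers splitting under Dudley's integral, and the same invocations of Lemma~\ref{lemma:est_id} and Lemma~\ref{lemma:est_fourier} (the latter with $T=F\Psi$), including the cancellation of the $\sqrt{n/m}$ prefactor against the $1/\sqrt{n}$ in the bound on $\norm{F\Psi}_{1\to\infty}$. No substantive differences.
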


\begin{proof}[Proof of Lemma~\ref{lemma:gamma2}]

By Dudley's inequality \cite{ledoux1991probability}, the $\gamma_2$ function is bounded from above by
\[
\gamma_2(\Delta,\norm{\cdot}_{2\to2})
\lesssim \int_0^\infty \sqrt{\log N(\Delta, \epsilon B_{S_\infty})} d\epsilon
\]
where $B_{S_\infty}$ denotes the unit ball in the Schatten class $S_\infty$ with the spectral norm $\norm{\cdot}_{2 \to 2}$,
and the covering number $N(\Delta, \epsilon B_{S_\infty})$ is given by
\[
N(\Delta, \epsilon B_{S_\infty}) := \inf
\left\{ k \in \mathbb{N} \pl \left| \pl \exists (y_i)_{i=1}^k \pl \text{s.t.} \pl \Delta \subset \bigcup_{i=1}^k y_i+\epsilon B_{S_\infty} \right. \right\}.
\]

In (\ref{eq:Ruvubsn}), we showed that the spectral norm of $R_{u,v}$ is bounded by $\sqrt{\mu/m}$ for all $R_{u,v} \in \Delta$.
This implies
\[
N(\Delta,\epsilon B_{S_\infty}^n) = 1, \quad \forall \epsilon \geq \sqrt{\mu/m}.
\]
Therefore, the integral reduces to
\[
\int_0^\infty \sqrt{\log N(\Delta,\epsilon B_{S_\infty})} d\epsilon
= \int_0^{\sqrt{\mu/m}} \sqrt{\log N(\Delta,\epsilon B_{S_\infty})} d\epsilon \pl .
\]

We first compute an estimate for the difference.
For $R_{u,v}, R_{u',v'} \in \Delta$, we have
\begin{align*}
& \norm{R_{u,v}-R_{u',v'}}_{2\to2} \\
& \leq \norm{R_{u,v-v'}+R_{u-u',v'}}_{2\to2} \\
& \leq \norm{R_{u,v-v'}}_{2\to2} + \norm{R_{u-u',v'}}_{2\to2} \\
& \leq \sqrt{n/m} \norm{u}_2 \norm{F\Psi(v-v')}_\infty + \sqrt{n/m} \norm{u-u'}_2 \norm{F\Psi v'}_\infty \\
& \leq \sqrt{n/m} \norm{F\Psi(v-v')}_{\infty} + \sqrt{\mu/m} \norm{u-u'}_2 \pl,
\end{align*}
where the third step holds by (\ref{eq:Ruvsn}) and the last step follows from $v \in B_2^n \cap \C_{\mu}$.

Therefore, we get
\[
N(\Delta,\epsilon B_{S_\infty})
\kl N\left(F \Psi(B_2^n \cap \tGamma_{s_2} \cap \C_{\mu}), \frac{\epsilon}{2} \sqrt{\frac{m}{n}} B_{\infty}^n\right)
N\left(B_2^n \cap \tGamma_{s_1},\frac{\epsilon}{2}\sqrt{\frac{m}{\mu}} B_2^n\right),
\]
where the covering numbers in the right-hand-side are defined in $\ell_\infty^n$ and $\ell_2^n$.

Using $\sqrt{a+b}\le \sqrt{a}+\sqrt{b}$, we deduce with a change of variable that
\begin{equation}
\allowdisplaybreaks
\label{eq:lemma_gamma2_ub}
\begin{aligned}
& \int_0^{\sqrt{\mu/m}} \log^{1/2} N(\Delta,\epsilon B_{S_\infty}) d\epsilon  \\
& \leq \int_0^{\sqrt{\mu/m}} \log^{1/2}
N\left( F\Psi(B_2^n \cap \tGamma_{s_2} \cap \C_{\mu}), \frac{\epsilon}{2} \sqrt{\frac{m}{n}} B_{\infty}^n\right) d\epsilon \\
& \quad + \int_0^{c\sqrt{\mu/m}} \log^{1/2} N\left(B_2^n \cap \tGamma_{s_1},\frac{\epsilon}{2}\sqrt{\frac{m}{\mu}}B_2^n\right) d\epsilon \\
& \leq 2 \sqrt{\frac{n}{m}} \underbrace{ \int_0^{\sqrt{\mu/4n}} \log^{1/2}
N( F\Psi(B_2^n \cap \tGamma_{s_2}), \epsilon B_{\infty}^n) d\epsilon }_{ = (*) }\\
& \quad + 2\sqrt{\frac{\mu}{m}} \underbrace{ \int_0^{1/2} \log^{1/2} N(B_2^n \cap \tGamma_{s_1},\epsilon B_2^n) d\epsilon }_{ = (**) } \pl .
\end{aligned}
\end{equation}

By Lemma~\ref{lemma:est_fourier} and (\ref{eq:proof:thm:bds_rap:maxbnd}), an upper bound on $(*)$ is given as
\begin{equation}
\label{eq:sumek_star}
(*) \lesssim c \sqrt{s_2/n} \log^{5/2} n.
\end{equation}

By Lemma~\ref{lemma:est_id}, an upper bound on $(**)$ is given as
\begin{equation}
\label{eq:sumek_dstar}
(**) \lesssim \sqrt{s_1} \log^2 n.
\end{equation}

Plugging (\ref{eq:sumek_star}) and (\ref{eq:sumek_dstar}) into (\ref{eq:lemma_gamma2_ub}) completes the proof.
\qd

\subsection{Proof of Theorem~\ref{thm:bds_rop_cross}}

\begin{proof}[Proof of Theorem~\ref{thm:bds_rop_cross}]
Note that $\langle \hat{u} \hat{v}^\transpose, \A^*\A (u v^\transpose) \rangle$ is rewritten as
\begin{align}
\langle \A(\hat{u} \hat{v}^\transpose), \A (u v^\transpose) \rangle
= \left\langle \sqrt{\frac{n}{m}} S_\Omega (\Phi \hat{u} \conv \Psi \hat{v}),
\pl \sqrt{\frac{n}{m}} S_\Omega (\Phi u \conv \Psi v) \right\rangle. \label{eq:proof:thm:bds_rop_cross:ip}
\end{align}

The random variable in (\ref{eq:proof:thm:bds_rop_cross:ip}) can be understood as a fourth-order Gaussian process indexed by $u$, $\hat{u}$, $v$, and $\hat{v}$. We haven't found relevant results for the suprema of high-order Gaussian processes in the literature. In order to exploit known result for the second-order Gaussian process \cite{krahmer2014suprema}, slightly extended in this paper in Section~\ref{sec:chaos}, we introduce the following trick that lowers the order of the random process using properties of a Gaussian distribution.

Since $\langle u, \hat{u} \rangle = 0$, we have $\mathbb{E} \Phi \hat{u} u^* \Phi^* = 0$.
This implies that $\Phi \hat{u}$ and $\Phi u$ are uncorrelated Gaussian vectors; hence, they are independent.
Let $\widetilde{\Phi}$ be an i.i.d. copy of $\Phi$.
Then, replacing $\Phi \hat{u}$ in (\ref{eq:proof:thm:bds_rop_cross:ip}) by $\widetilde{\Phi} \hat{u}$ does not change the distribution.
Similarly, $\Psi \hat{v}$ and $\Psi v$ are independent;
hence, we can also replace $\Psi v$  in (\ref{eq:proof:thm:bds_rop_cross:ip}) by $\widetilde{\Psi} v$
for an i.i.d. copy $\widetilde{\Psi}$ of $\Psi$ without changing the distribution.
In other words, the inner product in (\ref{eq:proof:thm:bds_rop_cross:ip}) as a random process has the same distribution
to that of the following random variable:
\begin{equation}
\left\langle \sqrt{\frac{n}{m}} S_\Omega (\widetilde{\Phi} \hat{u} \conv \Psi \hat{v}),
\pl \sqrt{\frac{n}{m}} S_\Omega (\Phi u \conv \widetilde{\Psi} v) \right\rangle. \label{eq:proof:thm:bds_rop_cross:ip2}
\end{equation}

Similarly to the proof of Theorem~\ref{thm:bds_rap},
under the assumption of Theorem~\ref{thm:bds_rop_cross},
except with probability $n^{-\beta_1}$,
$\widetilde{\Phi}$ satisfies
\begin{align}
\label{eq:proof:thm:bds_rop_cross:ripPhi}
\sup_{u \in B_2^n \cap \tGamma_{s_1}} |u^* (\widetilde{\Phi}^*\widetilde{\Phi} - I_n)u| \leq \delta/2, \\
\label{eq:proof:thm:bds_rap:maxbndPsi}
\norm{F \widetilde{\Psi}}_{1 \to \infty} \leq c \sqrt{\log n},
\end{align}
and $\widetilde{\Psi}$ satisfies
\begin{align}
\label{eq:proof:thm:bds_rop_cross:ripPsi}
\sup_{v \in B_2^n \cap \tGamma_{s_2}} |v^* (\widetilde{\Psi}^*\widetilde{\Psi} - I_n)v| \leq \delta/2, \\
\label{eq:proof:thm:bds_rap:maxbndPhi}
\norm{F \widetilde{\Phi}}_{1 \to \infty} \leq c \sqrt{\log n},
\end{align}
for absolute constants $c > 0$ and $\beta_1 \in \mathbb{N}$.

We proceed with conditioning on the above events.
Therefore, in the remainder of the proof, $\widetilde{\Phi}$ and $\widetilde{\Psi}$ will be treated as deterministic matrices satisfying (\ref{eq:proof:thm:bds_rop_cross:ripPhi}), (\ref{eq:proof:thm:bds_rap:maxbndPhi}), (\ref{eq:proof:thm:bds_rop_cross:ripPsi}), and (\ref{eq:proof:thm:bds_rap:maxbndPsi}).
Conditioned on $\widetilde{\Phi}$ and $\widetilde{\Psi}$, the order of the random process in (\ref{eq:proof:thm:bds_rop_cross:ip2}) is 2.

Define $R_{u,v} \in \mathbb{C}^{m \times n^2}$ and $\xi_{\mathrm{R}} \in \mathbb{C}^{n^2}$ respectively by
\begin{align*}
R_{u,v} {} & := u^\transpose \otimes \sqrt{\frac{n}{m}} S_\Omega F^* D_{F \widetilde{\Psi} v},
\end{align*}
and
\[
\xi_{\mathrm{R}} := \sqrt{n} (I_n \otimes F) \vect(\Phi).
\]
Then, $R_{u,v} \xi_{\mathrm{R}}$ satisfies
\[
R_{u,v} \xi_{\mathrm{R}} = \frac{n}{\sqrt{m}} S_\Omega F^* (F \widetilde{\Psi} v \odot F \Phi u)
= \sqrt{\frac{n}{m}} S_\Omega (\widetilde{\Psi} v \conv \Phi u).
\]

Define $L_{\hat{u},\hat{v}} \in \mathbb{C}^{m \times n^2}$ and $\xi_{\mathrm{L}} \in \mathbb{C}^{n^2}$ respectively by
\begin{align*}
L_{\hat{u},\hat{v}} {} & := \hat{v}^\transpose \otimes \sqrt{\frac{n}{m}} S_\Omega F^* D_{F \widetilde{\Phi} \hat{u}},
\end{align*}
and
\[
\xi_{\mathrm{L}} := \sqrt{n} (I_n \otimes F) \vect(\Psi).
\]
Then, $L_{\hat{u},\hat{v}} \xi_{\mathrm{L}}$ satisfies
\[
L_{\hat{u},\hat{v}} \xi_{\mathrm{L}} = \frac{n}{\sqrt{m}} S_\Omega F^* (F \Psi v \odot F \widetilde{\Phi} u)
= \sqrt{\frac{n}{m}} S_\Omega (\Psi \hat{v} \conv \widetilde{\Phi} \hat{u}).
\]

Therefore, we have
\begin{align*}
{} & \left\langle \sqrt{\frac{n}{m}} S_\Omega (\widetilde{\Phi} \hat{u} \conv \Psi \hat{v}),
\pl \sqrt{\frac{n}{m}} S_\Omega (\Phi u \conv \widetilde{\Psi} v) \right\rangle \\
{} & = \langle L_{\hat{u},\hat{v}} \xi_{\mathrm{L}}, R_{u,v} \xi_{\mathrm{R}} \rangle \\
{} & = \left\langle
\begin{bmatrix} 0 & L_{\hat{u},\hat{v}} \end{bmatrix}
\begin{bmatrix} \xi_{\mathrm{R}} \\ \xi_{\mathrm{L}} \end{bmatrix}
, \pl
\begin{bmatrix} R_{u,v} & 0 \end{bmatrix}
\begin{bmatrix} \xi_{\mathrm{R}} \\ \xi_{\mathrm{L}} \end{bmatrix}
\right\rangle .
\end{align*}

Note that
\[
\mathbb{E}_{\Phi,\Psi}
\left\langle
\begin{bmatrix} 0 & L_{\hat{u},\hat{v}} \end{bmatrix}
\begin{bmatrix} \xi_{\mathrm{R}} \\ \xi_{\mathrm{L}} \end{bmatrix}
, \pl
\begin{bmatrix} R_{u,v} & 0 \end{bmatrix}
\begin{bmatrix} \xi_{\mathrm{R}} \\ \xi_{\mathrm{L}} \end{bmatrix}
\right\rangle = 0.
\]

Let $\xi := [\xi_{\mathrm{R}}^\transpose, \xi_{\mathrm{L}}^\transpose]^\transpose$.
Then $\xi \in \cz^{2n^2}$ is a Gaussian vector satisfying $\mathbb{E}_{\Phi,\Psi} \xi \xi^* = 0$.

Therefore, it suffices to show
\[
\sup_{\begin{subarray}{c} M \in \Delta_{\mathrm{R}} \\ M' \in \Delta_{\mathrm{L}} \end{subarray}} \left| \langle M' \xi, M \xi \rangle - \mathbb{E}_\Phi \langle M' \xi, M \xi \rangle \right| \leq \delta,
\]
where $\Delta_{\mathrm{R}}, \Delta_{\mathrm{L}} \subset \cz^{m \times 2n^2}$ are respectively defined by
\begin{align*}
\Delta_{\mathrm{R}} {} & := \left\{\begin{bmatrix} R_{u,v} & 0 \end{bmatrix}
:\pl u \in B_2^n \cap \tGamma_{s_1}, \pl v \in B_2^n \cap \tGamma_{s_2} \cap \C_{\mu_2} \right\}, \\
\Delta_{\mathrm{L}} {} & := \left\{\begin{bmatrix} 0 & L_{\hat{u},\hat{v}} \end{bmatrix}
:\pl \hat{u} \in B_2^n \cap \tGamma_{s_1} \cap \C_{\mu_1}, \pl \hat{v} \in B_2^n \cap \tGamma_{s_2} \right\}.
\end{align*}

The desired concentration of the gaussian bilinear form is then derived using Theorem~\ref{thm:ip}.
To apply Theorem~\ref{thm:ip}, we need to compute $d_{\mathrm{F}}(\Delta)$, $d_{2\to2}(\Delta)$, and $\gamma_2(\Delta,\norm{\cdot}_{2\to2})$
of $\Delta_{\mathrm{R}}$ and $\Delta_{\mathrm{L}}$.

Augmenting a matrix by adding zero columns does not change its (Frobenius/spectral) norms.
Therefore, $d_{\mathrm{F}}(\Delta)$, $d_{2\to2}(\Delta)$, and $\gamma_2(\Delta,\norm{\cdot}_{2\to2})$ of $\Delta_{\mathrm{R}}$
are the same to those of $\Delta$ in the proof of Theorem~\ref{thm:bds_rap}, i.e.,
\begin{align*}
d_{\mathrm{F}}(\Delta_{\mathrm{R}}) {} & \leq \sqrt{1+\delta/2}, \\
d_{S_\infty}(\Delta_{\mathrm{R}}) {} & \leq \sqrt{\mu_2/m}, \\
\gamma_2(\Delta_{\mathrm{R}},\norm{\cdot}_{2\to2})
{} & \lesssim \sqrt{\frac{\mu_2 s_1 + s_2}{m}} \log^2 n.
\end{align*}
By symmetry, we also have
\begin{align*}
d_{\mathrm{F}}(\Delta_{\mathrm{L}}) {} & \leq \sqrt{1+\delta/2}, \\
d_{S_\infty}(\Delta_{\mathrm{L}}) {} & \leq \sqrt{\mu_1/m}, \\
\gamma_2(\Delta_{\mathrm{L}},\norm{\cdot}_{2\to2})
{} & \lesssim \sqrt{\frac{s_1 + \mu_1 s_2}{m}} \log^2 n.
\end{align*}

Similarly to the proof of Theorem~\ref{thm:bds_rap},
applying the above bounds to Theorem~\ref{thm:ip} concludes the proof.

\end{proof}

\section{Discussions: Restricted Angle-Preserving Property?}

In fact, the $(\calS,\calS',\delta)$-RAP of $\A$ does not almost preserve the angle between two vectors $w \in \calS$ and $w' \in \calS'$ as we desire. What is preserved is the inner product between $w$ and $w'$ and it is implied that
\[
\left| \frac{\langle \A(w'), \A(w)\rangle}{\norm{\A(w)}_2 \norm{\A(w')}_2} - \frac{\langle w', w\rangle}{\hsnorm{w} \hsnorm{w'}} \right|
\leq \frac{2\delta\sqrt{1+\delta}}{1+\sqrt{1-\delta}},
\quad \forall w \in \calS, \pl \forall w' \in \calS'.
\]
In particular, for $\delta < 1$, we have
\[
\frac{2\delta\sqrt{1+\delta}}{1+\sqrt{1-\delta}} \leq 2\sqrt{2} \pl \delta.
\]
Unlike the conventional $(\calS,\delta)$-RIP of $\A$ that preserves the length of a vector $w \in \calS$ through $\A$,
the strength of the perturbation in the upper bound
does not depend on the input angle $\langle w', w \rangle / \hsnorm{w} \hsnorm{w'}$ but a fixed constant.

On the contrary, every isometry map (without any restriction on the domain)
preserves the inner product and angle, i.e., isometry has an angle-preserving property.
Different implications among such properties due to the restriction on the domain would be of interest for future research.

\section{Conclusion}

We derive a near optimal performance guarantee for the subsampled blind deconvolution problem.
The flat-spectra condition is crucial in obtaining this near optimal performance guarantee.
Mathematically, the structure from the spectral flatness is given as a nonconvex cone,
which motivated various RIP-like properties different from the standard RIP.
In this paper, we derived RIP-like properties in subsampled blind deconvolution at near optimal sample complexity.
Combined with the performance guaranteed derived from these properties in a companion paper \cite{LeeLJB2015},
we show that sparse signals of certain random models are provably reconstructed from samples of their convolution
at near optimal sample complexity.
Extended RIP results on i.i.d. subgaussian and partial Fourier sensing matrices for compressible signals might be of independent interest.

\section*{Acknowledgement}
K. Lee thanks A. Ahmed and F. Krahmer for discussions, which inspired the random dictionary model in this paper.
This work was supported in part by the National Science Foundation under Grants CCF 10-18789, DMS 12-01886, and IIS 14-47879.

\bibliographystyle{IEEEtran}
\bibliography{IEEEabrv,lra,linalg,cs,bdconv,preprint,appl}

\begin{thebibliography}{10}
\providecommand{\url}[1]{#1}
\csname url@samestyle\endcsname
\providecommand{\newblock}{\relax}
\providecommand{\bibinfo}[2]{#2}
\providecommand{\BIBentrySTDinterwordspacing}{\spaceskip=0pt\relax}
\providecommand{\BIBentryALTinterwordstretchfactor}{4}
\providecommand{\BIBentryALTinterwordspacing}{\spaceskip=\fontdimen2\font plus
\BIBentryALTinterwordstretchfactor\fontdimen3\font minus
  \fontdimen4\font\relax}
\providecommand{\BIBforeignlanguage}[2]{{%
\expandafter\ifx\csname l@#1\endcsname\relax
\typeout{** WARNING: IEEEtran.bst: No hyphenation pattern has been}%
\typeout{** loaded for the language `#1'. Using the pattern for}%
\typeout{** the default language instead.}%
\else
\language=\csname l@#1\endcsname
\fi
#2}}
\providecommand{\BIBdecl}{\relax}
\BIBdecl

\bibitem{ahmed2014blind}
A.~Ahmed, B.~Recht, and J.~Romberg, ``Blind deconvolution using convex
  programming,'' \emph{{IEEE} Trans. Inf. Theory}, vol.~60, no.~3, pp.
  1711--1732, Mar. 2014.

\bibitem{LeeLJB2015}
K.~Lee, Y.~Li, M.~Junge, and Y.~Bresler, ``Blind recovery of sparse signals
  from subsampled convolution,'' \emph{arXiv preprint arXiv:1511.06149}.

\bibitem{candes2005decoding}
E.~J. Cand{\`e}s and T.~Tao, ``Decoding by linear programming,'' \emph{{IEEE}
  Trans. Inf. Theory}, vol.~51, no.~12, pp. 4203--4215, 2005.

\bibitem{candes2008restricted}
E.~J. Cand{\`e}s, ``The restricted isometry property and its implications for
  compressed sensing,'' \emph{Comptes Rendus Mathematique}, vol. 346, no.~9,
  pp. 589--592, 2008.

\bibitem{chandrasekaran2012convex}
V.~Chandrasekaran, B.~Recht, P.~A. Parrilo, and A.~S. Willsky, ``The convex
  geometry of linear inverse problems,'' \emph{Found. Comput. Math}, vol.~12,
  no.~6, pp. 805--849, 2012.

\bibitem{LeeWB2013spf}
K.~Lee, Y.~Wu, and Y.~Bresler, ``Near optimal compressed sensing of sparse
  rank-one matrices via sparse power factorization,'' \emph{arXiv preprint
  arXiv:1312.0525}, 2013.

\bibitem{krahmer2014suprema}
F.~Krahmer, S.~Mendelson, and H.~Rauhut, ``Suprema of chaos processes and the
  restricted isometry property,'' \emph{Comm. Pure Appl. Math.}, vol.~67,
  no.~11, pp. 1877--1904, 2014.

\bibitem{talagrand2005generic}
M.~Talagrand, \emph{The Generic Chaining: Upper and Lower Bounds of Stochastic
  Processes}.\hskip 1em plus 0.5em minus 0.4em\relax Berlin: Springer, 2005.

\bibitem{baraniuk2008simple}
R.~Baraniuk, M.~Davenport, R.~DeVore, and M.~Wakin, ``A simple proof of the
  restricted isometry property for random matrices,'' \emph{Constr. Approx.},
  vol.~28, no.~3, pp. 253--263, 2008.

\bibitem{davidson2001local}
K.~R. Davidson and S.~J. Szarek, ``Local operator theory, random matrices and
  banach spaces,'' \emph{Handbook of the geometry of Banach spaces}, vol.~1,
  no. 317-366, p. 131, 2001.

\bibitem{carl1985inequalities}
B.~Carl, ``Inequalities of {B}ernstein-{J}ackson-type and the degree of
  compactness of operators in {B}anach spaces,'' in \emph{Ann. Inst. Fourier
  (Grenoble)}, vol.~35, no.~3.\hskip 1em plus 0.5em minus 0.4em\relax Institut
  Fourier, 1985, pp. 79--118.

\bibitem{schutt1984entropy}
C.~Sch{\"u}tt, ``Entropy numbers of diagonal operators between symmetric
  {B}anach spaces,'' \emph{J. Approx. Theory}, vol.~40, no.~2, pp. 121--128,
  1984.

\bibitem{pisier1206probabilistic}
G.~Pisier, ``Probabilistic methods in the geometry of {B}anach spaces.
  probability and analysis ({V}arenna, 1985), 167--241,'' \emph{Lecture Notes
  in Math}, vol. 1206.

\bibitem{candes2006near}
E.~J. Candes and T.~Tao, ``Near-optimal signal recovery from random
  projections: Universal encoding strategies?'' \emph{{IEEE} Trans. Inf.
  Theory}, vol.~52, no.~12, pp. 5406--5425, 2006.

\bibitem{rudelson2008sparse}
M.~Rudelson and R.~Vershynin, ``On sparse reconstruction from {F}ourier and
  {G}aussian measurements,'' \emph{Comm. Pure Appl. Math.}, vol.~61, no.~8, pp.
  1025--1045, 2008.

\bibitem{rauhut2010compressive}
H.~Rauhut, ``Compressive sensing and structured random matrices,'' in
  \emph{Theoretical Foundations and Numerical Methods for Sparse Recovery},
  ser. Radon Series Comp. Appl. Math., M.~Fornasier, Ed.\hskip 1em plus 0.5em
  minus 0.4em\relax Berlin: de Gruyter, 2010, vol.~9, pp. 1--92.

\bibitem{ledoux1991probability}
M.~Ledoux and M.~Talagrand, \emph{Probability in Banach spaces, volume 23 of
  Ergebnisse der Mathematik und ihrer Grenzgebiete (3)[Results in Mathematics
  and Related Areas (3)]}.\hskip 1em plus 0.5em minus 0.4em\relax
  Springer-Verlag, Berlin, 1991.

\end{thebibliography}

\end{document}